\pgfplotsset{compat=1.16}
\newtheorem{thm}{Theorem}[section]
\newtheorem{corr}[thm]{Corollary}
\newtheorem{lem}[thm]{Lemma}
\newtheorem{defn}[thm]{Definition}
\newtheorem{rem}[thm]{Remark}
\newtheorem{nota}[thm]{Notation}
\def \N {\mathbb{N}}
\def \Z {\mathbb{Z}}
\def\Z{\mathbb Z}
\def\Q{\mathbb Q}
\def\R{\mathbb R}
\def\C{\mathbb C}
\def\OO{\mathcal{O}}
\def\H{\mathcal H}
\DeclareMathOperator{\Vol}{\mathrm{Vol}}
\DeclareMathOperator{\Aut}{\mathrm{Aut}}
\DeclareMathOperator{\m}{\mathfrak{M}}
\DeclareMathOperator{\Mat}{\mathrm{Mat}}
\DeclareMathOperator{\Log}{\mathrm{Log}}
\DeclareMathOperator{\BDD}{\mathrm{BDD}}
\DeclareMathOperator{\CVP}{\mathrm{CVP}}
\DeclareMathOperator{\Lip}{\mathrm{Lip}}
\DeclareMathOperator{\GL}{\mathrm{GL}}
\DeclareMathOperator{\disc}{\mathrm{disc}}
\DeclareMathOperator{\Prob}{\mathrm{Prob}}
\DeclareMathOperator{\dist}{\mathrm{dist}}
\DeclareMathOperator{\vect}{\mathrm{vect}}
\DeclareMathOperator{\poly}{\mathcal{P}}
\DeclareMathOperator{\polylog}{\mathrm{polylog}}
\DeclareMathOperator{\str}{\mathrm{str}}
\newcommand{\norm}[1]{\left\lVert#1\right\rVert}
\algrenewcommand\algorithmicrequire{\textbf{Input:}}
\algrenewcommand\algorithmicensure{\textbf{Output:}}
\begin{document}

%%% Page de garde %%%
\title{The special case of cyclotomic fields in quantum algorithms for unit groups}

\iftrue
%\orcidID{0000-0002-1900-6520}
\author{Razvan Barbulescu\inst{1}\thanks{The first author has been funded by the Hybrid quantum initiative (HQI) of the France 2030 program.} and Adrien Poulalion\inst{2}}

\institute{
 Univ. Bordeaux, CNRS, Bordeaux INP, IMB, UMR 5251,  F-33400, Talence, France\\  \url{razvan.barbulescu@u-bordeaux.fr}
\and
Alice \& Bob, Corps des Mines, France\\
\url{adrien.poulalion@mines.org}
}

\else
\author{Undisclosed authors}
\fi

\date{}
\maketitle
\begin{abstract}
Unit group computations are a cryptographic primitive for which one has a fast quantum algorithm, but the required number of qubits is $\tilde{O}(m^5)$. In this work we propose a modification of the algorithm for which the number of qubits is $\tilde{O}(m^2)$  in the case of cyclotomic fields.  Moreover, under a recent conjecture on the size of the class group of $\Q(\zeta_m+\zeta_m^{-1})$, the quantum algorithms is much simpler because it is a hidden subgroup problem (HSP) algorithm rather than its error estimation counterpart: continuous hidden subgroup problem (CHSP). We also discuss the (minor) speed-up obtained when exploiting Galois automorphisms thnaks to the Buchmann-Pohst algorithm over $\OO_K$-lattices.
\end{abstract}

\section{Introduction}

 The difficulty to compute class groups and its cardinality, the class group, plays an important role in cryptography. Notably, they are at the foundation of a time commitments protocol~\cite{Castagnos2021commitement}, a scheme of homomorphic encryption~\cite{Castagnos2015homomorphic} and a verifiable delay function~\cite{Wesolowski2019VDF}. 
Note that the particular cases play an important role as the former two examples use quadratic fields and for the latter the cyclotomic case can be the fastest. 

From a perspective of theoretical computer science, these schemes are broken because there exists a quantum algorithm of polynomial time complexity~\cite{biasse2016efficient}. However, from a cryptographic point of view it is required to obtain a more precise estimation of the amount of qubits and of quantum gates, as it was done for the other primitives of public key cryptography%~\cite{Beauregard2003}, \cite{Proos2003}, \cite{Zalka2006}, \cite{Haner2017}, \cite{Lauter2017quantum}, 
\cite{Bernstein2019,Bernstein2021concrete,Liu2021,Beauregard2003,Proos2003,Zalka2006,Haner2017,Lauter2017quantum,Liu2021}.

Let us recall the chronology of the works addressing this question. Note that both in the classical and quantum algorithms class group and unit group are very similar and they are sometimes computed simultaneously in order to have a halting condition, as in the classical algorithm of Buchmann and  McCurley. In 1994, Shor designed a quantum algorithm to factor integers and, respectively, solve the discrete log problem in cyclic groups \cite{shor1994algorithms}. Kitaev~\cite{kitaev1995quantum}  reformulated the algorithms by reducing both problems to finding the set of periods of functions defined over $\Z^r$ with finite $r$ and taking values in a finite set; this is the hidden subgroup problem~(HSP). Note that the parameter $r$ affects the time and space complexity only in a polynomial manner.

In 2002, Hallgren~\cite{hallgren2007polynomial}\footnote{The version of 2002 had 6 pages whereas the version published in 2007 has 19 pages.} reduced the computation of a fundamental unit of a real quadratic field to finding the set of periods of a function defined over $\R$. More generally, if a function defined over $\R^m$ has a lattice (discrete subgroup) of periods, then finding the set of periods is the  continuous hidden subgroup problem~(CHSP). As for factoring and discrete log, the unit group is sub-exponential on a classical computer and polynomial on a quantum~one.

When $m=1$, the main difference between HSP and CHSP is the problem of finding $\alpha\in\R$ when given approximations of $k \alpha$ and $\ell \alpha$ for two integers $k$ and~$\ell$ ; this is solved using continued fractions. To compute fundamental units for a family of number fields of constant degree, one has to solve CHSP for $m\geq 1$ bounded by a constant. In contemporary works, Hallgren~\cite{hallgren2005fast} and Schmidt~\cite{schmidt2005polynomial} replaced the continued fractions by an LLL-based algorithm of Buchmann and Pohst~\cite{BuchmannPohst}. 

To this point, CHSP and HSP have an identical quantum part and differ in the classical post-treatment. When the degree of the number fields is free to be unbounded in a family of discriminants going to infinity, one has to solve CHSP for unbounded parameters $m$, possibly as large as the logarithm of the discriminants. In this case, the previous algorithms~\cite{hallgren2005fast,schmidt2005polynomial} require an exponential time in $m$, the degree of the number fields, as it was shown by Biasse and Song in \cite[Prop. B.2]{biasse2019cyclo}\footnote{The proposition has number $2$ in the version of 2015.}.

In 2014, Eisenträger et al.~\cite{eisentrager2014quantum-short}\footnote{The 10-page-long original version of 2014 doesn't contain the proofs, which were made public only in the 47-page-long version of 2019.} achieved to make this algorithm polynomial-time for arbitrary degree number fields. For this the periodic function is multiplied by a Gaussian. Before the full version~\cite{eisentrager2014quantum} of this work was made public, a second team, de Boer et al.~\cite{boer2020quantum}, worked on a more thorough analysis of the algorithm and  established in 2019 the precise complexities (space and time) for the CHSP. 

\paragraph{Our contribution and cryptographic recommendations.}
The goal of this article is to make possible a precise resource comparison between computing unit and class groups on one side and breaking symmetric cryptosystems like AES on the other side, as requested by the NIST specifications~\cite{NIST}. Indeed, experts in the technology of quantum computers study the NISQ scenario in which error-corrected quantum computers with 100-to-1000 qubits become available whereas a quantum computer with $10^{20}$ qubits remains unfeasible or too expensive to be considered. In this context, the estimation of the number of qubits as ``polynomial'' in~\cite{eisentrager2014quantum} is not enough. We instantiate the case of unit and class group in the general frame of CHSP ~\cite{boer2020quantum} and obtain that the number of qubits is $O(m^5)$, e.g. when $m=10000$ a quantum attack requires $10^{20}$ error-corrected qubits. 

We continue by investigating possible weak number fields $K$: is it a security problem if $K$ has automorphisms? Is it a weakness if $K$ is cyclotomic? 

In Section~\ref{sec:automorphisms} we explain that automorphisms allow to implement LLL over $\OO_K$ and obtain a speed-up which is at most polynomial in the number of automorphisms. 

In the case of cyclotomic fields, the speed-up is not automatic: if one runs the general algorithm on these fields, the number of qubits is $O(m^5\log m)$. However, we make the following  observation: 
\textbf{Assume that $L$ is an unknown lattice,  $M\subset L$ is known and has a short basis, and one has an algorithm to produce vectors near $L^*$. Then one can use the short basis of $M$ to bring any vector near $L^*$ to their nearest vector of $M^*$ and they will automatically be in $L^*$.}

\underline{Cryptographic recommendations.} A precise estimation of the resources to compute class and unit groups shows that they are more resistant than symmetric cryptosystems if one avoids weak keys. If one wants to use class group based (CGP-based) cryptography then:
\begin{itemize}
\item one must avoid fixed degree fields as they use $O(m)$ qubits;
\item one must avoid cyclotomic fields as they use $O(m^2\log m)$ qubits; Also a HSP-based algorithm exists;
\item one might prefer to use fields with automorphisms to speed-up the computations as they require $\Omega(m^3)$ qubits and the speed-up concerns only the Buchmann-Pohst algorithm and is polynomial in the number of automorphisms.
\end{itemize}

\paragraph{Road map.} The article is organized as follows. In Section~\ref{sec:improve}
, we explain our improvement as a general problem of lattices, outside the context of unit groups and quantum computing.  Then, in Section~\ref{sec:previous} we make a detailed presentation of the quantum algorithms for unit groups and combine results to state the complexity of the algorithms for arbitrary number fields. We exploit the automorphisms in Subsection~\ref{sec:automorphisms}. In Section~\ref{sec:cyclotomic}
 we instantiate the previous complexities in the case of cyclotomic and abelian fields, and compare with our improvement from Section~\ref{sec:improve}. Finally, in Section~\ref{sec:HSP} we make an attempt to reduce the unit-group computations to HSP instead of CHSP, this reduces massively the number of qubits needed, especially for cyclotomic fields of index $p^k$ for small primes~$p$.

\section{Our improvement seen as a lattice problem}\label{sec:improve}

Let $\langle x, y\rangle$ denote the dot product of two vectors $x,y\in \C^n$. Given a lattice $L\subset \R^m$, we call dual of $L$ the lattice 
\begin{equation*}
L^*=\{y\in \R^m\mid \forall x\in L, \langle x, y\rangle \in \Z^n\}.
\end{equation*}
Let us recall a series of properties of $L^*$ (see~\cite{banaszczyk1993new} for a reference).
\begin{lem}\label{lem:dual}
\begin{enumerate}
\item If $L$ is generated by the rows of a matrix $B$ then $L^*$ is generated by the rows of $(B^{t})^{-1}$; in particular $\det L^*=1/\det L$;
\item If $M$ is a sublattice of $L$ then $L^*\subset M^*$ and $[L:M]=[M^*:L^*]$. 
\end{enumerate}
\end{lem}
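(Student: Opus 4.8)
The plan is to treat the two parts in turn, proving the determinant identity in part~1 first and then reusing it (or, alternatively, an adapted basis) for the index identity in part~2. Throughout I would work with row vectors and write $L=\{zB\mid z\in\Z^m\}$, assuming $L$ full-rank so that $B$ is invertible. The key observation for part~1 is that $y\in L^*$ if and only if $By^t\in\Z^m$: since $\langle zB,y\rangle=z(By^t)$, this scalar lies in $\Z$ for every $z\in\Z^m$ exactly when each coordinate of the column vector $By^t$ is an integer. Now set $C=(B^t)^{-1}$; using $(A^{-1})^t=(A^t)^{-1}$ one gets $C^t=B^{-1}$, hence $BC^t=I$. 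Both inclusions are then immediate: if $y=wC$ with $w\in\Z^m$ then $By^t=BC^tw^t=w^t\in\Z^m$, so the row span of $C$ lies in $L^*$; conversely if $y\in L^*$ then $w:=By^t\in\Z^m$ and $y=w^tC$ lies in that row span. Thus $L^*$ is generated by the rows of $(B^t)^{-1}$, and $\det L^*=\lvert\det(B^t)^{-1}\rvert=1/\lvert\det B\rvert=1/\det L$.

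The first claim of part~2 is purely formal: if $M\subseteq L$, any $y$ satisfying $\langle x,y\rangle\in\Z$ for all $x\in L$ a fortiori satisfies it for all $x\in M$, so $L^*\subseteq M^*$. For the index identity I would pick a basis adapted to the pair $M\subseteq L$: by the theory of elementary divisors (Smith normal form) there is a basis $b_1,\dots,b_m$ of $L$ and positive integers $d_1,\dots,d_m$ with $d_1b_1,\dots,d_mb_m$ a basis of $M$, so that $[L:M]=\prod_i d_i$. Writing $b_1^*,\dots,b_m^*$ for the dual basis of $L^*$ (characterized by $\langle b_i,b_j^*\rangle=\delta_{ij}$), a one-line check shows that $b_1^*/d_1,\dots,b_m^*/d_m$ is the dual basis of $M^*$, since $\langle d_ib_i,\,b_j^*/d_j\rangle=\delta_{ij}$. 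Hence $M^*/L^*\cong\bigoplus_i\Z/d_i\Z$ and $[M^*:L^*]=\prod_i d_i=[L:M]$.

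Once part~1 is available, the index identity also drops out of determinants: $[L:M]=\det M/\det L$, and using $L^*\subseteq M^*$ together with part~1, $[M^*:L^*]=\det L^*/\det M^*=(1/\det L)/(1/\det M)=\det M/\det L$. I expect the only real subtlety to be bookkeeping rather than depth: one must keep both lattices full-rank in $\R^m$ so that the duals have the same rank and covolumes are well defined, and one must not conflate the integrality of the scalar $\langle x,y\rangle$ with that of the coordinate vector $By^t$. The elementary-divisor argument is the one I would write out in full, recording the determinant computation only as a remark.
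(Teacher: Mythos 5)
Your proof is correct. The paper does not actually prove this lemma --- it states it as a recollection of standard facts and refers to Banaszczyk's paper for a reference --- so there is nothing to compare against; both of your arguments (the row-span computation with $C=(B^t)^{-1}$ for part~1, and the elementary-divisor/dual-basis argument, with the determinant identity as a cross-check, for part~2) are the standard ones and are complete. The only cosmetic point is the row/column bookkeeping in the converse direction of part~1, where $w=By^t$ is a column vector so that $y=w^tC$; you already flag exactly this kind of issue yourself, and it does not affect correctness.
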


\subsection{Informal presentation}
The CHSP algorithm to compute unit groups calls $L$ the lattice of units and follows the following strategy:
\begin{enumerate}
\item Use a quantum procedure to generate vectors of $\C^n$ in a small ball around each vector of a set of generators of $L^*$.
\item Apply the Buchmann-Pohst algorithm (which is classical of polynomial time) and the previously found vectors to find a basis of $L^*$. 
\item Invert the matrix of the previously found basis to obtain a basis of $L$ (this is classical of polynomial time).
\end{enumerate}

A characteristic of Buchmann-Pohst's algorithm is that it has a large precision decrease\footnote{See Th 10 in~\cite{boer2020quantum} for a precise estimation.}, namely one needs a large number of bits of precision on the vectors computed in Step $1$ in order to have much fewer bits of precision on the basis of $L^*$. Given a precision requirement $\tau$, i.e. $\log \tau$ bits of precision, in~\cite{boer2020quantum} one computes that the input of Buchmann-Pohst must have precision which depends on~$k$. In this work we reduce the number of qubits required by Step 1.

In the case of cyclotomic fields, one has a basis for the lattice of cyclotomic units $M\subset L$. This allows to compute a basis of $M^*$ which contains the unknown lattice $L^*$. Our idea is as follows: when given a vector sufficiently close to $L^*$, solve CVP with respect to $M^*$ and automatically it will be in $L^*$. In more detail, let $m_1,\ldots,m_n$ be a basis of $L$. We do Step 1 at low precision and obtain $\widetilde{y}$ and then we correct it into $y$ so that $\langle y,m_i\rangle \in \Z^n$ for all $i=1,2,\ldots,n$. At this point we have a generating set of $L^*$ at high precision. For simplicity of exposition we can still apply the Buchmann-Pohst algorithm. In view of a practical implementation one can also use an algorithm for Hermite normal form.

\begin{figure}
\definecolor{qqqqff}{rgb}{0,0,1}
\definecolor{ududff}{rgb}{0.30196078431372547,0.30196078431372547,1}
\definecolor{cqcqcq}{rgb}{0.7529411764705882,0.7529411764705882,0.7529411764705882}
\begin{center}
\begin{tikzpicture}[line cap=round,line join=round,>=triangle 45,x=1cm,y=1cm,scale=0.8]
\draw [color=cqcqcq, xstep=1cm,ystep=1cm] (-5.77,-3.24) grid (3.03,4.24);
\clip(-5.77,-3.24) rectangle (3.83,5.24);
\draw (1.21,-0.04) node[anchor=north west] {CVP};
\draw [line width=2pt] (1,-1) circle (0.544885309033011cm);
\draw (-3.11,3.76) node[anchor=north west] {$M^*$};
\draw [color=qqqqff](-3.07,2.88) node[anchor=north west] {$L^*$};
\begin{scriptsize}
\draw [fill=ududff] (-2,-1) circle (4.5pt);
\draw [fill=ududff] (1,-1) circle (4.5pt);
\draw [fill=ududff] (-1,1) circle (4.5pt);
\draw [fill=ududff] (2,1) circle (4.5pt);
\draw [fill=ududff] (-4,1) circle (4.5pt);
\draw [fill=ududff] (-5,-1) circle (4.5pt);
\draw [fill=ududff] (-3,-3) circle (4.5pt);
\draw [fill=ududff] (-6,-3) circle (4.5pt);
\draw [fill=ududff] (0,-3) circle (4.5pt);
\draw [fill=black] (-3,0) circle (1pt);
\draw [fill=black] (-2,0) circle (1pt);
\draw [fill=black] (-2,1) circle (1pt);
\draw [fill=black] (-3,1) circle (1pt);
\draw [fill=black] (-4,0) circle (1pt);
\draw [fill=black] (-5,0) circle (1pt);
\draw [fill=black] (-5,1) circle (1pt);
\draw [fill=black] (-6,1) circle (1pt);
\draw [fill=black] (-6,0) circle (1pt);
\draw [fill=black] (-6,-1) circle (1pt);
\draw [fill=black] (-6,-2) circle (1pt);
\draw [fill=black] (-5,-3) circle (1pt);
\draw [fill=black] (-4,-3) circle (1pt);
\draw [fill=black] (-4,-2) circle (1pt);
\draw [fill=black] (-4,-1) circle (1pt);
\draw [fill=black] (-5,-2) circle (1pt);
\draw [fill=black] (-3,-2) circle (1pt);
\draw [fill=black] (-3,-1) circle (1pt);
\draw [fill=black] (-1,0) circle (1pt);
\draw [fill=black] (-1,-1) circle (1pt);
\draw [fill=black] (-1,-2) circle (1pt);
\draw [fill=black] (-2,-2) circle (1pt);
\draw [fill=black] (-2,-3) circle (1pt);
\draw [fill=black] (-1,-3) circle (1pt);
\draw [fill=black] (0,1) circle (1pt);
\draw [fill=black] (0,0) circle (1pt);
\draw [fill=black] (0,-1) circle (1pt);
\draw [fill=black] (1,1) circle (1pt);
\draw [fill=black] (1,0) circle (1pt);
\draw [fill=black] (2,0) circle (1pt);
\draw [fill=black] (2,-1) circle (1pt);
\draw [fill=black] (2,-2) circle (1pt);
\draw [fill=black] (2,-3) circle (1pt);
\draw [fill=black] (1,-3) circle (1pt);
\draw [fill=black] (1,-2) circle (1pt);
\draw [fill=black] (0,-2) circle (1pt);
\draw [fill=black] (-2,3) circle (1pt);
\draw [fill=black] (-1.51,3.02) circle (1pt);
\draw [fill=black] (-1,3) circle (1pt);
\draw [fill=ududff] (-2.07,2.34) circle (4.5pt);
\draw [fill=ududff] (-1.51,2.32) circle (4.5pt);
\draw [fill=ududff] (-0.97,2.3) circle (4.5pt);
\end{scriptsize}
\end{tikzpicture}
\end{center}
\caption{Illustration of Algorithm~\ref{algo:Babai}.
}
\label{fig:illustration}
\end{figure}
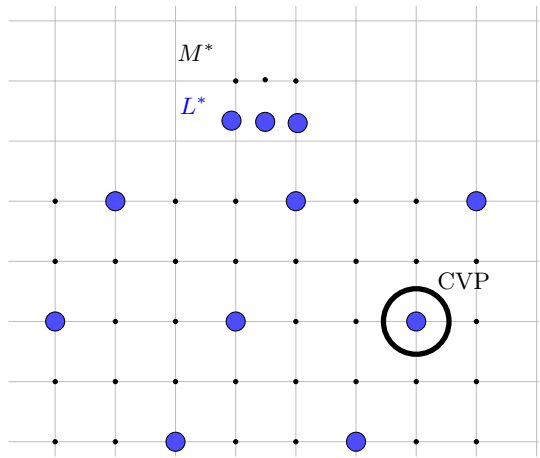

\subsection{Precise statement}
Step 1 produces vectors $\widetilde{y}$ close to $L^*$ such that $y:=\CVP(\widetilde{y},L^*)$ follow a continuous distribution on $\R^m$ which is close to a discrete Gaussian distribution $c$ on $L^*$. The following definition is a precise description of its output and focuses on the properties of $c$ used in the following sections.
\begin{defn}[Dual lattice sampler]\label{defn:dual lattice sampler}
Let $c:L^*\rightarrow \C$ be a map such that $\sum_{\ell^*\in L^*} |c_{\ell^*}|^2=1$. Let $1/4>\epsilon>0$, $1/2>\delta>0$ and $r>0$ be three parameters. 
An algorithm is a dual lattice sampler of parameters $(\epsilon,\delta,r)$ if it outputs a vector $x \in \R^m$ such that, for any finite set $S\subset L^*$, one has
\begin{equation*}
    \Prob\left(y\in \bigcup_{\ell^*\in S} B(\ell^*, \delta \lambda_1^*)\right)\geq \sum_{\ell^*\in S} |c_{\ell^*}|^2 - \eta.
\end{equation*}
for a map $c$ satisfying:

\begin{enumerate}
    \item \textbf{Uniformity property}: there exists $\varepsilon \leq 1/4$ such that, for every strict sublattice $N \subsetneq L^*$ : $$ \sum_{\ell^* \in N} |c_{\ell^*}|^2 < \frac{1}{2}+\varepsilon .$$
    \item \textbf{Concentration property}: There exists $r=r(m)$ and $0<p<\frac{1}{2}-\varepsilon-\eta$ such that: $$ \sum_{|\ell^*| > r} |c_{\ell^*}|^2 < p. $$
\end{enumerate}
\end{defn}

We can now define the problem we tackle. 

\begin{problem}\label{problem:dual}
Let $M\subset L\subset \R^m$ be two lattices. Let $\epsilon,\delta,r>0$ and $k\in \N$ be parameters. We are given
\begin{itemize}
\item a basis of $M$;
\item an upper bound on $[L:M]$;
\item a dual lattice solver for $L$ of parameters $\delta$ and $\eta$ of our choice.
\end{itemize}
Compute a basis $(z_1,\ldots,z_m) \in L$ given by the value of $[L:M]$ and the coordinates of the $z_i$ in a basis of $\frac{1}{[L:M]}M$. 
\end{problem}
Let $B_M$ be the matrix whose rows are the basis $(z_1,\ldots,z_m)$ of $M$. We propose the following solution to this problem, where we will make precise later the value of $\delta$ so that the time complexity is polynomial.

\begin{algorithm}
\caption{Full computation of $L$ using the dual lattice sampler.}
\begin{algorithmic}[1]
\Require 
\begin{itemize}
\item upper bound of $\det L$ of the lattice~$L$ and a lower bound on $\lambda_1^*$
\item $\epsilon,\delta, r>0$ and a dual sampler of $L$ of these parameters; 
\item a basis of a sublattice $M\subset L$
\end{itemize}
\Ensure a basis of $L$
\State $k\gets m  \log_2(\sqrt{m}\Lip(f)) +\log_2(\det L)) $  \Comment{Value from \cite[Th 3]{boer2020quantum}}
\For{$i=1,2,\ldots,k$}\Comment{\textcolor{blue}{Step 1 - Quantum}}
\State $\widetilde{y_i} \gets \text{output}( \text{dual lattice sampler, \fbox{$\delta\lambda_1^*(L)=1/2\norm{B_M}$}})$ 
\State \fbox{$y_i\gets \BDD(\widetilde{y_i},M^*,1/2\norm{B_M})$} \label{task BDD}
\EndFor 

\State \sout{Use Buchmann-Pohst algorithm on $(y_1,\ldots,y_t)$.} Compute the Hermite normal form to find a basis $(y_1',\ldots,y'_m)$ of $L^*$ from $(y_1,\ldots,y_k)$. Here the vectors have exact integer coordinates in a basis of $M^*$. \Comment{\textcolor{blue}{Step 2 - Classical} }\label{task HNF}
\State Compute the Smith normal form (SNF) of $(y_1',\ldots,y'_m)$ to obtain the exact value of $[L:M]$. Output $(B^{-1})^t\in \frac{1}{[L:M]}\Mat_m(\Z)$, where $B\in \Mat_m(\Z)$ is the matrix a basis of $L^*$ written in a basis of $M^*$. \Comment{\textcolor{blue}{Step 3 - Classical}}
\end{algorithmic}
\label{algo:full L}
\end{algorithm}

\begin{nota}\label{notation:norm}
For any matrix $B\in \GL_n(\C)$ and any $\alpha,\beta\in \R_+\bigcup \{\infty\}$ the operator norm is  
$$\norm{B}_{\alpha,\beta} = \max\{ \norm{Mv}_\alpha \mid \norm{v}_\beta\leq 1\}.$$
When $\alpha$ and $\beta$ are not specified\footnote{The norm operator used in the analysis of the Buchmann-Pohst algorithm~\cite[Appendix B]{boer2020quantum} is $\norm{\cdot}_{\infty,1}$.}, $ \norm{B}$ denotes $\norm{B}_{\infty,1}$:
$$\norm{B}=\norm{B}_{\infty,1} = \max_j \sum_i |B_{i,j}|.$$ 
When $\alpha=\beta$ we simply write $\norm{B}_\alpha$. Note that $$\norm{B}_{\infty,1}\geq \norm{B}_{2}=\max_i\sqrt{\sum_j |B_{i,j}|^2}\geq \norm{B}_{\infty,1}/\sqrt{m} .$$
In this article, the complexity of the algorithms depends on $O(\log \norm{B}_{\infty,1})=O(\log \norm{B}_{2} +\log m)$. Since $\log \norm{B}=\Omega(m)$, this justifies that we drop the indices of the operator norm.  
\end{nota}

\begin{thm}\label{th:correctness}
a) Algorithm~\ref{algo:full L} solves Problem~\ref{problem:dual} if $\delta \lambda_1^*(M) < 1/(2 \norm{B_M})$. The number of qubits is $O(Qm)$ with $Q$ given in Equation~\eqref{lem:qubits}.
\end{thm}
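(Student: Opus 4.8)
The plan is to prove correctness in three stages that mirror the three steps of Algorithm~\ref{algo:full L}, and then read off the qubit count from Step~1 alone. The structural fact that drives everything is the dual inclusion: since $M\subset L$, Lemma~\ref{lem:dual}(2) gives $L^*\subset M^*$ together with $[M^*:L^*]=[L:M]$. Hence every point of $L^*$ is already a point of the \emph{known} superlattice $M^*$, whose basis is the rows of $(B_M^t)^{-1}$. The whole idea of Step~1 is therefore to convert an \emph{approximate} dual vector into an \emph{exact} element of $L^*$ by decoding in $M^*$, so that Steps~2--3 reduce to exact integer linear algebra.

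The technical core, and the step I expect to fight, is showing that line~\ref{task BDD} returns a genuine element of $L^*$. Combining the sampler guarantee of Definition~\ref{defn:dual lattice sampler} with the concentration property, with probability at least $1-p-\eta$ the output is $\widetilde{y_i}=v+e$ with $v\in L^*\subset M^*$ and $\norm{e}_2\le\delta\lambda_1^*$, the sampler radius fixed to $1/(2\norm{B_M})$ in line~3. Decoding $\BDD(\widetilde{y_i},M^*,\cdot)$ by Babai rounding in the basis $(B_M^t)^{-1}$ amounts to computing $\widetilde{y_i}\,B_M^t=w+eB_M^t$ with $w\in\Z^m$, and returns $v$ exactly as soon as $\norm{eB_M^t}_\infty<1/2$. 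Notation~\ref{notation:norm} bounds $\norm{eB_M^t}_\infty\le\norm{e}_2\norm{B_M}_2\le\norm{e}_2\norm{B_M}$, so that the hypothesis $\delta\lambda_1^*(M)<1/(2\norm{B_M})$ forces this quantity strictly below $1/2$ and yields $y_i=v\in L^*$ with \emph{exact} integer coordinates over the basis of $M^*$. The delicate point is precisely this radius bookkeeping: one must verify that the sampler radius $1/(2\norm{B_M})$ really lies inside the unique-decoding radius $\lambda_1(M^*)/2$ of $M^*$, which is what the hypothesis encodes after noting $\lambda_1^*(M)=\lambda_1(M^*)$ and $L^*\subset M^*$.

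Next I would show that the collected $y_1,\dots,y_k$ generate all of $L^*$, not a proper sublattice, so that the Hermite normal form of line~\ref{task HNF} returns an actual basis of $L^*$ rather than of a strict sublattice. This is exactly where the uniformity and concentration properties enter: each exact sample falls into any fixed strict sublattice $N\subsetneq L^*$ with probability below $\tfrac12+\varepsilon+\eta<1$, while concentration confines the relevant samples to norm $\le r$ and thereby bounds the number of sublattices that must be excluded; a union bound then shows that the choice $k=m\bigl(\log_2(\sqrt m\,\Lip(f))+\log_2\det L\bigr)$ makes the $y_i$ span $L^*$ except with negligible probability, following \cite[Th~3]{boer2020quantum}. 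Given a basis of $L^*$ written over the basis of $M^*$ as an integer matrix $B$, the Smith normal form of line~\ref{task HNF}'s output produces $[M^*:L^*]=[L:M]$ as the product of elementary divisors, and by Lemma~\ref{lem:dual}(1) the matrix $(B^{-1})^t\in\frac{1}{[L:M]}\Mat_m(\Z)$ is a basis of $L$ with coordinates in $\frac1{[L:M]}M$ — precisely the output format demanded by Problem~\ref{problem:dual}.

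Finally, for the qubit count I would observe that Steps~2--3 are purely classical, so all qubits are consumed by Step~1, which invokes the dual lattice sampler on $\R^m$. A single invocation uses $O(Qm)$ qubits with $Q$ as in \eqref{lem:qubits}, and the $k$ repetitions of the loop reuse the same quantum register rather than accumulating; the improvement over \cite{boer2020quantum} comes exactly from running the sampler at the low precision $1/(2\norm{B_M})$ that the BDD correction of line~\ref{task BDD} makes admissible, which fixes $Q$ and hence the total $O(Qm)$.
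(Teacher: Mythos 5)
Your proposal is correct and follows essentially the same route as the paper's proof: the Babai-rounding bound $\norm{e B_M^t}_\infty\le\norm{e}_2\norm{B_M}_2<1/2$ to show line~\ref{task BDD} returns the exact point of $L^*$ (the paper's Lemma~\ref{lem:Babai} and step~2), the citation of de Boer et al.\ for the $k$ samples generating $L^*$ (the paper's Lemma~\ref{lem:value of k} and step~1), and exact HNF/SNF integer linear algebra for the index and the final basis, with all qubits charged to the sampler. The only difference is cosmetic: you sketch the union-bound argument behind the generation lemma where the paper simply cites it.
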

We prepare the proof with two lemmas.

\begin{lem}[Lemma $5$ in \cite{boer2020quantum}]\label{lem:value of k}
We note $k=\alpha (m+m\log_2 R+\log_2(\det L))$, for an absolute constant $\alpha>1$. 

Let $\widetilde{y_1},\widetilde{y_2},\ldots,\widetilde{y_k}$ be the first $k$ vectors output by a dual basis sampler. For $i=1,2,\ldots,k$ put $y_i=\CVP(\widetilde{y_i},L)$. Then for any value of the absolute constant $\alpha >2$ we have
$$\Prob( y_1,\ldots,y_k \text{ generate }L) \geq 1-c^{m},$$
where $c<1$ is an explicitly computable constant.
\end{lem}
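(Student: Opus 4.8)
The plan is to bound the failure probability $\Prob(\langle y_1,\dots,y_k\rangle \neq L)$ via the prime-by-prime characterization of lattice generation: the $y_i$ generate $L$ if and only if, for every prime $q$, their reductions span $L/qL\cong\F_q^m$, equivalently if and only if no maximal (prime-index) sublattice $N\subsetneq L$ contains all of them. Since each $\widetilde{y_i}$ comes from an independent call of the sampler and $y_i=\CVP(\widetilde{y_i},L)$, the $y_i$ are i.i.d.; by the uniformity property together with the sampler error $\eta$, every strict sublattice $N$ satisfies $\Prob(y_1\in N)\le \tfrac12+\varepsilon+\eta=:\beta<1$. Hence for a \emph{fixed} $N$ we get $\Prob(\text{all }y_i\in N)\le \beta^k$, decaying exponentially in $k$.

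I expect the main obstacle to be that $L$ has infinitely many maximal sublattices (about $q^{m-1}$ of index $q$, over all primes $q$), so a direct union bound diverges; resolving this is the heart of the argument and is exactly where the concentration property must be used. First I would show that with probability $1-c_1^m$ the list $y_1,\dots,y_k$ contains $m$ linearly independent vectors of norm at most $R$: each sample is that short with probability $\ge 1-p>\tfrac12+\varepsilon+\eta$ by concentration, so a Chernoff bound gives $\gg m$ short samples, while uniformity applied to the rank-$(m-1)$ sublattices (hyperplanes through the origin) prevents the short samples from being trapped in a proper subspace, so $m$ of them are independent. Conditioning on the full-rank sublattice $\Lambda$ that they span, Hadamard's inequality yields $[L:\Lambda]=\det\Lambda/\det L\le R^m/\det L=:D$, so $G:=L/\Lambda$ is a finite abelian group of order at most $D$.

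On this event the samples generate $L$ precisely when their images generate $G$, and they fail only if all images lie in a single maximal subgroup of $G$. The decisive gain is that a finite abelian group of order $D$ has only $\tilde{O}(D)$ maximal subgroups — summing $\frac{q^{d_q}-1}{q-1}\le 2|G_q|$ over the at most $\log_2 D$ primes $q$ dividing $D$ — rather than the $\approx D^m$ sublattices of index $\le D$ inside $L$. Each such maximal subgroup pulls back to a strict sublattice of $L$, so its capture probability is at most $\beta^{k}$; a union bound conditional on $\Lambda$ thus gives $\Prob(\text{fail}\cap A)\le \tilde{O}(D)\,\beta^{k}$, and averaging over $\Lambda$ preserves the bound because every realization has index at most $D$.

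Combining the two parts gives $\Prob(\text{fail})\le c_1^m+\tilde{O}(D)\,\beta^{k}$. Substituting $\log_2 D=O\!\left(m+m\log_2 R+\log_2\det L\right)$ and choosing the absolute constant $\alpha$ large enough that $\tilde{O}(D)\,\beta^{k}\le c_2^m$ makes the total at most $c^m$ with $c=\max(c_1,c_2)<1$, which is the claimed bound. The delicate point to get right is the conditioning: one must fix $\Lambda$ before invoking uniformity on the remaining samples, so that the union bound ranges over the $\tilde{O}(D)$ maximal subgroups of the \emph{fixed} quotient $G$ rather than over the unboundedly many sublattices of $L$, and then check that this estimate is uniform over all admissible realizations of $\Lambda$ so that it survives the averaging.
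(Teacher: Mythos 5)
First, a point of reference: the paper does not prove this lemma at all --- it is imported verbatim as Lemma~5 of \cite{boer2020quantum}, so the comparison must be with the proof there. That proof is a one-pass potential argument: call a sample \emph{useful} if it has norm at most $R$ (concentration property) and lies outside the sublattice generated by the previous useful samples (uniformity property), so that each sample is useful with probability at least $1/2-\varepsilon-p-\eta$, a positive constant; a useful sample either increases the rank of the generated sublattice (at most $m$ times) or at least halves its index, which once the rank is full is bounded via Hadamard's inequality by $R^m$ over the relevant determinant; hence at most $m+m\log_2 R+\log_2(\det L)$ useful samples are ever needed, and a Chernoff bound over the $k=\alpha(m+m\log_2 R+\log_2 \det L)$ trials gives failure probability $c^m$. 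Your two-phase plan --- first extract $m$ independent short samples spanning a full-rank $\Lambda$ of index at most $D$, then union-bound over the $\tilde{O}(D)$ maximal subgroups of the finite quotient $L/\Lambda$ --- is a genuinely different decomposition. It buys a transparent finite-group picture and correctly diagnoses why the naive union bound over the roughly $D^{m-1}$ maximal sublattices of $L$ diverges; what the sequential argument of \cite{boer2020quantum} buys in exchange is freedom from any union bound and from conditioning subtleties, because the usefulness indicator is adapted to the filtration of the samples and its conditional probability bound holds uniformly over the past.

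Two concrete repairs are needed before your version closes. (i) The conditioning issue you flag at the end is a genuine hole as written, not merely a delicate point: $\Lambda$ is a function of the very samples fed into the union bound, the phase-one samples lie in $\Lambda$ and hence in \emph{every} candidate $N\supseteq\Lambda$ (contributing no factor $\beta$), and conditioning on a realization of $\Lambda$ distorts the joint law of the samples, so the estimate $\Prob(\text{all }y_i\in N\mid \Lambda)\leq \beta^k$ is unjustified. The standard fix is sample splitting: build $\Lambda$ from the first $\lfloor k/2\rfloor$ samples, so that the event $\{\Lambda=\Lambda_0\}$ is measurable with respect to the first half; the second half then remains i.i.d.\ with the original law, the union bound gives $\tilde{O}(D)\,\beta^{k/2}$, and doubling $\alpha$ absorbs the loss. (ii) Your $\beta=\tfrac12+\varepsilon+\eta$ omits the tail mass $p$: the sampler of Definition~\ref{defn:dual lattice sampler} only lower-bounds the probability of landing $\delta\lambda_1^*$-close to a chosen \emph{finite} set $S$, so to bound the escape probability one must take $S=\{\ell\in L\setminus N : \norm{\ell}\leq r\}$, yielding $\Prob(y_i\notin N)\geq \tfrac12-\varepsilon-p-\eta$ and hence $\beta=\tfrac12+\varepsilon+p+\eta$, still strictly below $1$ by the constraint $p<\tfrac12-\varepsilon-\eta$; the same correction applies to your phase-one count of short samples escaping the current span. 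With these repairs your route works, though it establishes the conclusion for $\alpha$ exceeding a constant determined by $\beta$ rather than literally every $\alpha>2$ --- a discrepancy that is already present in the paper's loose transcription of the statement from \cite{boer2020quantum}.
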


From now on, we fix such a $k$, and we use it in Algorithm~\ref{algo:full L}. 

\begin{algorithm}
\begin{algorithmic}
\Require $\widetilde{y}\in \R^m$ such that $d(\widetilde{y},M^*)< 1/(2\norm{B_M})$
\Ensure $\CVP(\widetilde{y},M^*)$, as a vector and as coordinates in a basis of $M^*$
\State compute $\widetilde{z}:=B_M^t\widetilde{y}$;
\State round $z=(z_1,\ldots,z_n):=(\lfloor\widetilde{z_1}\rceil,\ldots,\lfloor\widetilde{z_n}\rceil $)
\State \Return $y:=B_{M^*}z\in M^*$ and $z\in \Z^m$.
\end{algorithmic}
\caption{Babai's $\BDD$ solver.}
\label{algo:Babai}
\end{algorithm}

Let us put $\delta$ such that $d(\widetilde{y},M^*)=\delta \lambda_1^*(M)$. The following result can be seen as a reformulation of a result of Babai.
\begin{lem}[\cite{Babai86} Eq. (4.3)] \label{lem:Babai}
Algorithm~\ref{algo:Babai} solves $\BDD(\widetilde{y},M^*,\delta\lambda_1^*(M))$ in classical polynomial time when $\delta \lambda_1^*(M)< 1/(2\norm{B_M}_2)$, where $B_M$ is the matrix of a basis of~$M$. 
\end{lem}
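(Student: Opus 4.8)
The plan is to recognise Algorithm~\ref{algo:Babai} as Babai's rounding procedure for the lattice $M^*$ and to show that the hypothesis on $\delta$ forces the rounding to be exact, so that the output is genuinely the nearest point of $M^*$. First I would fix the dual-basis description. Let $b_1,\dots,b_m$ be the basis of $M$ encoded in $B_M$, and let $b_1^*,\dots,b_m^*$ be the dual basis; by Lemma~\ref{lem:dual}(1) it is the basis of $M^*$ read off from $(B_M^t)^{-1}=B_{M^*}$, and it satisfies $\langle b_i,b_j^*\rangle=\delta_{ij}$. Every $v\in M^*$ has a unique expansion $v=\sum_j z_j\,b_j^*$ with $z\in\Z^m$, and its integer coordinates are recovered by pairing against the primal basis, $z_i=\langle b_i,v\rangle$. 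Up to the transpose conventions of Lemma~\ref{lem:dual}, this pairing is exactly the linear map $\widetilde y\mapsto B_M^t\widetilde y$ computed in the first line: the algorithm writes $\widetilde y$ in dual-basis coordinates, rounds each coordinate, and reconstructs the associated point of $M^*$.

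The core of the argument is an exact-rounding estimate. Let $y^*\in M^*$ attain $d(\widetilde y,M^*)$, write $\widetilde y=y^*+e$ with $\norm{e}_2=d(\widetilde y,M^*)=\delta\lambda_1^*(M)$, and let $z^*\in\Z^m$ be the dual-basis coordinates of $y^*$. By linearity the computed coordinates satisfy $\widetilde z_i=\langle b_i,\widetilde y\rangle=z_i^*+\langle b_i,e\rangle$, so $\lfloor\widetilde z_i\rceil=z_i^*$ as soon as $|\langle b_i,e\rangle|<1/2$ for every $i$. By Cauchy--Schwarz, $|\langle b_i,e\rangle|\le\norm{b_i}_2\,\norm{e}_2\le(\max_i\norm{b_i}_2)\,\norm{e}_2=\norm{B_M}_2\,\norm{e}_2$, where I use that in the paper's convention $\norm{B_M}_2=\max_i\sqrt{\sum_j|B_{i,j}|^2}$ is precisely the length of the longest basis vector $\max_i\norm{b_i}_2$. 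The hypothesis $\delta\lambda_1^*(M)<1/(2\norm{B_M}_2)$ then yields $|\langle b_i,e\rangle|<\norm{B_M}_2\cdot 1/(2\norm{B_M}_2)=1/2$ for all $i$. Hence $z=\lfloor\widetilde z\rceil=z^*$ and the returned vector $y=B_{M^*}z=y^*$ is the closest point of $M^*$, which is exactly what $\BDD(\widetilde y,M^*,\delta\lambda_1^*(M))$ asks for (and the output $z=z^*$ simultaneously provides its coordinates in the dual basis).

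Finally I would record the running time: the algorithm performs two matrix--vector products and $m$ nearest-integer roundings, hence runs in time polynomial in $m$ and the bit-length of the inputs once the entries of $B_M$ and $(B_M^t)^{-1}$ are given. The step I expect to require the most care is not any estimate but the bookkeeping that identifies the map $\widetilde y\mapsto B_M^t\widetilde y$ with ``take inner products against the primal basis $b_1,\dots,b_m$'': this is what converts the operator-norm hypothesis into the per-coordinate bound $|\langle b_i,e\rangle|<1/2$. In particular one must check that the quantity controlling the error is the maximal row $2$-norm $\norm{B_M}_2$ appearing in the notation, rather than the spectral norm, which is exactly why the statement is phrased with $\norm{\cdot}_2$; everything past that point is Cauchy--Schwarz together with the definition of the dual basis.
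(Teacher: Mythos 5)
Your proof is correct and follows essentially the same route as the paper's: both identify the algorithm as Babai rounding in dual-basis coordinates and show the rounding is exact because the hypothesis $\delta\lambda_1^*(M)<1/(2\norm{B_M}_2)$ forces each coordinate error below $1/2$ via Cauchy--Schwarz. Your per-coordinate bound $|\langle b_i,e\rangle|\le\norm{b_i}_2\norm{e}_2$ is in fact slightly cleaner than the paper's chain $\norm{\cdot}_\infty\le\norm{\cdot}_2\le\norm{B_M}_2\norm{\widetilde y-y'}_2$, since with the paper's max-row-norm convention for $\norm{\cdot}_2$ the direct $\infty$-norm estimate is the one that genuinely holds, but this is a presentational difference, not a different argument.
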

\begin{proof}

We claim that $\lfloor B_M^t \widetilde{y}\rceil = B_M^t y'$ where $y'=\CVP(\widetilde{y},M^*)$. Indeed 
\begin{small}
$$\norm{B_M^t\widetilde{y}-B_M^ty'}_\infty\leq \norm{B_M^t\widetilde{y}-B_M^ty'}_2\leq \norm{B_M}_2\cdot \norm{\widetilde{y}-y'}_2=\norm{B_M}_2\delta\lambda_1^*(M) <1/2.$$ 
\end{small}
Since $ \lfloor B_M^t \widetilde{y}\rceil = B_M^t y' $, by Lemma~\ref{lem:dual}, $y=B_{M^*}=((B_M)^t)^{-1}(B_M^ty')=y'$, so the algorithm is correct.
\end{proof}

\begin{proof}[Proof of Theorem~\ref{th:correctness}.a)]
We structure the proof in several steps.

\underline{step 1.} We apply Lemma~\ref{lem:value of k} to the lattice $L^*$. Then, with probability $1-c^m$, the vectors $y_1,y_2,\ldots,y_k$  generate $L^*$.

\underline{step 2.} 
Let us fix $i\in \{1,2,\ldots,k\}$. By the definition of the dual sampler, with probability greater than $1-\eta$, $\dist(\widetilde{y_i},L^*)< \delta \lambda_1^*(M^*)$.  
Hence Lemma~\ref{lem:Babai} applies and Line~\ref{task BDD} is executed in classical polynomial time. 

Since, $L^*\subset M^*$ and $y_i':=\CVP(\widetilde{y_i},L^*)$ belongs to $M^*$. Since $d(\widetilde{y_i},L^*)<\lambda_1^*(M)/2 $, $y_i'$ is the closest vector of $M^*$ to $\widetilde{y_i}$, or equivalently $y_i'=y_i$. We summarize by $y_i=\CVP(\widetilde{y},L)$.

\underline{step 3.} Algorithm~\ref{algo:Babai} can output $z$ in addition to $y$. This means that the vectors $y_i\in L^*$ are given by their integer coordinates with respect to a basis of $M^*$. 

In Line~\ref{task HNF}, one computes the HNF~\cite[Sec 2.4.2]{cohen1996course} by manipulating integer coefficients only. This is done in classical polynomial time without loss of precision.

\underline{step 4.} Since $B_M\in \Mat_{k}(\Z)$, $(B_M^{-1})\in \frac{1}{\det B_M}\Mat_k(\Z)$. Finally, $B_M$ and its SNF~\cite[Sec. 2.4.4]{cohen1996course} have the same determinant, which equals $[L:M]$. Hence, the last step computes $[L:M]$ in polynomial time.
\end{proof}

\section{Previous results on the quantum unit-group calculation}\label{sec:previous}

\subsection{A summary on CHSP}

We first define the Continuous Hidden Subgroup Problem to which the problem of computing the unit group can be reduced. 

\begin{defn}[Continuous Hidden Subgroup Problem - CHSP \cite{eisentrager2014quantum}]\label{defn:CHSP} Let $f~:~ \R^m \rightarrow \mathcal{S}$, where $\mathcal{S}=\oplus_{i\in\{0,1\}^n} \C |i\rangle$ is the space of states of $n$ qubits. \\
The function $f$ is \textbf{an $(a,r,\varepsilon)$-oracle hiding the full-rank lattice} $L$ if and only if it verifies the following technical conditions:
\begin{enumerate}
    \item $L$ is the period of $f$, i.e. $\forall x\forall \ell\in L, f(x+\ell)=f(x)$. (periodicity)
    \item The function $f$ is $a$-Lipschitz. (Lipschitz condition)
    \item $\forall x,y\in \R^m$ such that $dist(x-y,L) \geq r$, we have $|\langle f(x)~\vert ~f(y) \rangle| \leq \varepsilon$. (strong periodicity)
\end{enumerate}
 Given an efficient quantum algorithm to compute~$f$, compute the hidden lattice of periods~$L$. 
\end{defn}

Let us now recall the algorithm which solves CHSP. We mimick the technique which solve the Hidden Subgroup Problem (HSP). For this latter, let $g~:~ G \rightarrow \mathcal{S}$, with $G$ a finite cyclic  group. The solution goes as follows:
\begin{enumerate}
\item compute the superposition $\sum_{x\in G}|x\rangle$ and then apply $g$ to obtain the superposition of the values of $g$:~ $|\psi(g)\rangle=\sum_{x \in G} g(x)\vert ~ x ~\rangle$;
\item apply the Quantum Fourier Transform (see~\cite{shor1994algorithms}) to obtain\\ $\sum_{y \in G} \hat{g}(y)\vert ~ y~\rangle$; this expression is equal to $\sum_{y \in H^\perp} \hat{g}(y)\vert ~ y~\rangle$
\item measure the state to obtain an element $g\in H^\perp$.
\end{enumerate}
A few iterations of the algorithm allow to have a set of generators of $H^\perp$ and hence to compute $h$. 

Coming back to the continuous case, if we were able to implement the Fourier Transform, we would be able to draw random vectors of $L^*$. If a set of generators is known, then one can extract a basis to completely describe $L^*$. Then $L$ would be obtained thanks to Lemma~\ref{lem:dual}. The obstacle is here that $\R^m$ is infinite and the Fourier Transform (QFT) is now an integral instead of a finite sum so we cannot compute the QFT precisely. As a way around, we do an approximate computation of the QFT by the means of a Riemann sum, which amounts to restrict the domain to a segment and to discretize~it. Set $\rho_{\sigma}(x)~=~ \exp(- \frac{\pi^2 || x||^2 }{\sigma^2})$.

\begin{lem}[Thm. 2 in~\cite{boer2020quantum}]\label{lem:dual lattice sampler}
There exists dual lattice sampler quantum algorithm of polynomial time which uses $Qm+n$ qubits, where 
\begin{equation}\label{lem:qubits}
Q=O\left(m\log\left(m\log \frac{1}{\eta}\right)\right)+ O\left( \log \left(\frac{\Lip(f)}{\eta\delta\lambda_1^*}  \right)\right).
\end{equation}
\end{lem}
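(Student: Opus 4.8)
The plan is to reconstruct the dual lattice sampler of de Boer et al. by transcribing the finite-group HSP recipe recalled just above into the continuous setting, with the integrals replaced by Riemann sums over a discretized and truncated domain. First I would fix a scale $\sigma$ and prepare, on the $Qm$ qubits encoding a grid inside a large cube of $\R^m$, the Gaussian-weighted superposition $\sum_x \rho_{\sigma}(x)\,|x\rangle$; the extra $n$ qubits hold the value register. Applying the oracle for $f$ entangles the two registers into $\sum_x \rho_{\sigma}(x)\,|x\rangle\,|f(x)\rangle$, after which I apply an approximate quantum Fourier transform to the domain register and measure it. In the idealized continuous computation the measured vector would be distributed according to the squared modulus of the Fourier transform of $\rho_{\sigma} f$, which—because $L$ is the period of $f$—is supported on the dual lattice $L^*$ and produces exactly a map $c$ of the shape demanded by Definition~\ref{defn:dual lattice sampler}.

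Next I would make the discretization explicit. Choosing a grid of spacing $1/N$ inside a cube of side $D$ gives $DN$ points per coordinate, hence $\lceil \log_2(DN)\rceil$ qubits per dimension; this quantity is exactly what must be shown to equal the stated $Q$. The Gaussian factor lets me truncate the cube once $D$ exceeds a few standard deviations, and the Lipschitz hypothesis (namely $f$ is $a$-Lipschitz with $a=\Lip(f)$) controls the variation of $f$ across a single grid cell, so that the Riemann sum faithfully approximates the continuous integral defining the Fourier transform.

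The main obstacle, and the heart of the argument, is the error analysis: I must bound the total deviation between the distribution actually produced and the ideal $c$ by the prescribed failure probability $\eta$. Three error sources accumulate: the Gaussian tail lost by truncating to the cube, the discretization (Riemann-sum) error governed by $\Lip(f)$ and the grid spacing $1/N$, and the inaccuracy of the approximate QFT. The strong-periodicity condition of the $(a,r,\varepsilon)$-oracle is what forces the Fourier mass to concentrate within $\delta\lambda_1^*$ of genuine dual-lattice points, thereby yielding the concentration and uniformity properties of $c$. Balancing the three errors against $\eta$ forces $N$ to scale like $\Lip(f)/(\eta\,\delta\lambda_1^*)$ and $D$ like $\sqrt{m\log(1/\eta)}$; propagating these through $\log_2(DN)$ produces precisely the two summands $O(m\log(m\log\tfrac{1}{\eta}))$ and $O(\log(\Lip(f)/(\eta\delta\lambda_1^*)))$ of Equation~\eqref{lem:qubits}.

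Finally I would assemble the pieces: the output vector lies within $\delta\lambda_1^*$ of $L^*$ with probability at least $1-\eta$, and the induced weight $c$ satisfies both the uniformity and the concentration properties, so the algorithm is a dual lattice sampler of parameters $(\epsilon,\delta,r)$. The qubit budget is the $Qm$ domain qubits plus the $n$ value qubits, giving the claimed $Qm+n$; and since the arithmetic on these registers together with the approximate QFT runs in time polynomial in $Qm+n$, the sampler is polynomial-time, completing the proof.
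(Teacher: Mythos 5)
The first thing to note is that the paper does not prove this lemma at all: it is imported verbatim as Theorem~2 of de Boer, Ducas and Fehr, and the text around it only recalls the informal HSP-to-CHSP analogy (truncate, discretize, approximate the QFT by a Riemann sum). So there is no internal proof to compare against; you are attempting to reconstruct the proof of the cited external result. Your architecture is the right one --- prepare $\sum_x \rho_\sigma(x)|x\rangle|f(x)\rangle$ on a truncated grid, apply an approximate QFT, measure, and argue via periodicity of $f$ that the Fourier mass concentrates near $L^*$ --- and the identification of the three error sources (Gaussian tail, Riemann-sum error controlled by $\Lip(f)$, QFT approximation) matches the structure of the original analysis.

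There is, however, a concrete quantitative gap at the exact point where the statement's content lives. You claim the cube side scales as $D \sim \sqrt{m\log(1/\eta)}$ and that ``propagating through $\log_2(DN)$'' yields the first summand of $Q$. But $\log_2\bigl(\sqrt{m\log(1/\eta)}\bigr) = O\bigl(\log(m\log\tfrac{1}{\eta})\bigr)$, which is missing the leading factor of $m$ in the stated term $O\bigl(m\log(m\log\tfrac{1}{\eta})\bigr)$. That factor is not cosmetic: it is why the overall qubit count for unit groups ends up at $O(m^3\log m)$ rather than $O(m^2\log m)$ in Theorem~\ref{th:CHSP}, and it reflects the fact that the ratio between the truncation scale and the grid spacing must be of size $(m\log\tfrac{1}{\eta})^{\Theta(m)}$ --- exponential in $m$ --- which your dimensional analysis does not account for. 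Relatedly, the error analysis you call ``the heart of the argument'' is asserted rather than carried out: the uniformity and concentration properties of $c$ in Definition~\ref{defn:dual lattice sampler}, and the balancing that forces the stated scalings of $N$ and $D$, are exactly the nontrivial content of de Boer et al.'s Theorem~2, and the sketch does not establish them. As written, the proposal is a plausible outline of the known proof strategy, not a proof.
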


\begin{algorithm}
\caption{A CHSP solver}
\label{algo:CHSP}
\begin{algorithmic}[1]
\Require A function $f$ which can be computed in polynomial time, as in Definition~\ref{defn:CHSP} having a hidden period lattice $L$. We require $\det L$ up to one bit of precision and $\Lip(f)$. A parameter $\tau$ of required precision.
\Ensure $(\tilde{x_1},\ldots, \tilde{x_m})$ such that $(x_1,\ldots,x_m)$ is a basis of~$L$ for $x_i=\CVP(\tilde{x_i})$ and $\norm{x_i-\tilde{x_i}}\leq \tau$ 
\State $k\mapsto \log_2(\sqrt{m}\Lip(f)) +\log_2(\det L))$ \Comment{value from \cite[th 3]{boer2020quantum}}  
\For{$i=1,2,\ldots, k$ }  \Comment{\textcolor{blue}{Step 1 - Quantum}}
\State $\tilde {y_i}\gets  \text{output}$(dual lattice sampler: \fbox{$\delta \lambda_1^*=\frac{(\lambda_1^*)^3(\det L)^{-1}}{2^{O(mk)}\norm{B_{L^*}}_\infty^m}$})\Comment{value from \cite[Appendix B.3 Th 4]{boer2020quantum}}
\State pass \Comment{compare to Algorithm~\ref{algo:full L}}
\EndFor 

\State Apply the Buchmann-Pohst algorithm on $(\tilde{y_1},\ldots,\tilde{y_k})$ and obtain $(\tilde{y_1'},\ldots,\tilde{y_m'})$ so that $\CVP(\tilde{y_i},L^*)_{i=1,m}$ is a basis of $L^*$   \Comment{\textcolor{blue}{Step 2 - Classical} }
\item Output the rows of $(B^{-1})^t$ where $B$ has rows $\tilde{u'_i}$.  \Comment{\textcolor{blue}{Step 3 - Classical}}
\end{algorithmic}
\end{algorithm}
\begin{thm}[Space complexity of the CHSP solver, Th 1 of~\cite{boer2020quantum}]\label{th:CHSP}
We set\footnote{The complexity of the CHSP solver depends on $\log 1/\eta$ and not on $\eta$ itself. Hence the contribution of $\eta$ is hidden in the $\tilde{O}$ notation.} $\eta=1/k^2$. Let $N_f$ be the amount of qubits necessary to encode $f$. We fix $\tau$ the error expected on basis' vectors. Then, Algorithm~\ref{algo:CHSP} is correct and requires a number of qubits:

\begin{equation}\label{eq:qubits}
\begin{array}{rcl}
N_{\text{qubits}}&=&O(m^3\cdot \log(m))+O(m^3 \cdot \log(\Lip(f)))+O(m^2\cdot \log(\mathrm{det}(L)))\\
&&\qquad +O \left (m \cdot \log \frac{\Lip(f)}{\lambda_1^*}  \right )+O \left (m \cdot \log \frac{1}{\lambda_1^* \cdot \tau} \right )+O(N_f).
\end{array}
\end{equation}
\end{thm}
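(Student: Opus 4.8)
The plan is to assemble the statement from the two building blocks already imported, namely Lemma~\ref{lem:value of k} (which controls how many sampler outputs are needed to generate $L^*$) and Lemma~\ref{lem:dual lattice sampler} (which gives the qubit cost $Qm+n$ of a single sampler call), together with the precision analysis of Buchmann--Pohst from~\cite[Th.~10]{boer2020quantum}. The theorem is then a matter of substituting the chosen parameters into $Q$ and collecting terms.

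First I would settle correctness. Fixing $\eta=1/k^2$, a union bound over the $k$ sampler calls in Step~1 shows that, except with probability $k\eta=1/k$, every $\widetilde{y_i}$ satisfies $\dist(\widetilde{y_i},L^*)<\delta\lambda_1^*$. Combined with Lemma~\ref{lem:value of k} applied to $L^*$, the vectors $y_i=\CVP(\widetilde{y_i},L^*)$ generate $L^*$ with probability $1-c^m-o(1)$. The boxed value $\delta\lambda_1^*=(\lambda_1^*)^3(\det L)^{-1}/(2^{O(mk)}\norm{B_{L^*}}_\infty^m)$ is precisely the input accuracy that, after the severe precision loss of Buchmann--Pohst quantified in~\cite[Th.~10]{boer2020quantum}, still recovers a basis of $L^*$ to the target accuracy $\tau$. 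Steps~2 and~3 are classical and polynomial: Buchmann--Pohst returns vectors whose $\CVP$ roundings form a basis of $L^*$, and the inversion $(B^{-1})^t$ then yields a basis of $L$ by Lemma~\ref{lem:dual}.

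For the qubit count I would note that the $k$ sampler calls run sequentially, each measuring and releasing its register, so the quantum memory is reused and the total is the cost of a single call, $Qm+n=Qm+N_f$, with $Q$ as in~\eqref{lem:qubits}. It then remains to substitute. With $\eta=1/k^2$ one has $\log(1/\eta)=O(\log k)$, so the first summand of $Q$ contributes $O(m^2\log(m\log k))$. Plugging the boxed precision into the second summand gives $\log\frac{\Lip(f)}{\eta\delta\lambda_1^*}=\log\Lip(f)+O(\log k)+O(mk)+m\log\norm{B_{L^*}}_\infty-3\log\lambda_1^*+\log\det L$, and the dominant $O(mk)$ term, multiplied by the outer $m$ of $Qm$ and expanded via $k=O(m+m\log R+\log\det L)$ from Lemma~\ref{lem:value of k} with $\log R=O(\log(\sqrt{m}\,\Lip(f)))$, produces $m^2k=O(m^3\log m)+O(m^3\log\Lip(f))+O(m^2\log\det L)$. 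Bounding $\log\norm{B_{L^*}}_\infty$ through $\det L^*=1/\det L$ and $\lambda_1^*$, together with the surviving linear contributions $\log\Lip(f)$ and $\log(1/\lambda_1^*)$ and the accuracy $\tau$ entering through the Buchmann--Pohst precision requirement, accounts for the remaining terms $O(m\log\frac{\Lip(f)}{\lambda_1^*})+O(m\log\frac{1}{\lambda_1^*\tau})+O(N_f)$.

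The step I expect to be the real obstacle is the precision calibration: one must show that the exponentially small input accuracy $\delta\lambda_1^*$ fixed in Step~1 is simultaneously achievable by the sampler at the stated qubit cost and fine enough that the precision decrease of Buchmann--Pohst still outputs a basis to accuracy $\tau$. Getting the exponent $O(mk)$ and the factor $\norm{B_{L^*}}_\infty^m$ right, and then tracking how the dependence of $k$ on $m$, $\Lip(f)$ and $\det L$ inflates the $O(mk)$ term into the cubic-in-$m$ qubit count, is where the quantitative content lives; everything else is routine bookkeeping with the $O$-notation.
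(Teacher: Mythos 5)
The paper offers no proof of this statement at all: it is imported verbatim as Theorem~1 of~\cite{boer2020quantum}, and the surrounding text immediately moves on to bounding $1/\lambda_1^*$. So there is no in-paper argument to measure you against; what you have written is a reconstruction of the cited proof. As such it is sound and uses exactly the ingredients the paper itself imports (Lemma~\ref{lem:value of k} for the number $k$ of sampler calls, Lemma~\ref{lem:dual lattice sampler} for the per-call cost $Qm+n$, and the boxed Buchmann--Pohst input accuracy from Algorithm~\ref{algo:CHSP}), and your accounting of the dominant term $m\cdot O(mk)=O(m^2k)$ with $k=O(m\log(\sqrt{m}\Lip(f))+\log\det L)$ correctly reproduces the three cubic/quadratic summands. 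Two small points of care: first, the paper states $k$ inconsistently (Lemma~\ref{lem:value of k} has $k=\alpha(m+m\log_2 R+\log_2\det L)$ while Algorithm~\ref{algo:CHSP} drops the factor $m$); only the lemma's version yields the stated $O(m^3\log\Lip(f))$, so you are right to use it, but it is worth noting you made a choice there. Second, the term $m\cdot m\log\norm{B_{L^*}}_\infty=O(m^2\log\norm{B_{L^*}}_\infty)$ is not obviously absorbed into the stated bound until one observes that a reduced dual basis satisfies $\log\norm{B_{L^*}}_\infty=O(m)+O(\log(1/\lambda_1(L)))$ and that $1/\lambda_1(L)=O(\Lip(f))$ for any function hiding $L$ (a point the paper exploits implicitly via $r<\lambda_1(L)/6$); spelling that out would close the one genuine gap in your bookkeeping. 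You flagged this calibration as the hard part yourself, and that assessment is accurate.
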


The complexity of the algorithm depends on $\lambda_1^*$ which has the advantage that it is an invariant of the lattice. However, $\lambda_1^*$ is difficult to lower bound and in some applications one has bounds on the coefficients of $B_L$, the matrix of some basis of $L$. 

\begin{lem}
$ 2^{-3m}\norm{B_L^t}    \leq  \frac{1}{\lambda_1^*}\leq \norm{B_L}$.
\end{lem}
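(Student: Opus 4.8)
The plan is to prove the two-sided bound $2^{-3m}\norm{B_L^t} \leq 1/\lambda_1^* \leq \norm{B_L}$ by relating the first minimum $\lambda_1^*$ of the dual lattice $L^*$ to the size of a basis matrix $B_L$ of $L$ itself, using the duality dictionary from Lemma~\ref{lem:dual} together with standard transference-type estimates.

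For the easy inequality $1/\lambda_1^* \leq \norm{B_L}$, I would argue as follows. By Lemma~\ref{lem:dual}, a basis of $L^*$ is given by the rows of $(B_L^t)^{-1}$. The quantity $\lambda_1^*$ is the length of the shortest nonzero vector of $L^*$, so it is at least $1/(\text{something controlled by the dual basis})$. The cleanest route is to observe that for any nonzero $x \in L$ and any nonzero $y \in L^*$ one has $\langle x, y\rangle \in \Z \setminus \{0\}$ is possible, but the sharper tool is the elementary fact that $\lambda_1(L) \cdot \lambda_1^* \geq 1$ is false in general; instead I would use that $\lambda_1^* \geq 1/\lambda_1(L)$ fails, so I must be careful. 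The correct elementary bound is: take the shortest vector $v \in L$ realizing $\lambda_1(L) \leq \norm{B_L}$ in the relevant norm; then there exists a dual vector pairing nontrivially, giving $\lambda_1^* \cdot \lambda_1(L) \geq 1$, hence $1/\lambda_1^* \leq \lambda_1(L) \leq \norm{B_L}$. The last step uses that some basis vector (a row of $B_L$) is itself a lattice point, so $\lambda_1(L) \leq \norm{B_L}$, with the operator-norm conventions of Notation~\ref{notation:norm} bridging the $\ell_2$ and $\ell_{\infty,1}$ norms at the cost of factors absorbed into the stated constants.

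For the lower bound $2^{-3m}\norm{B_L^t} \leq 1/\lambda_1^*$, equivalently $\lambda_1^* \leq 2^{3m}/\norm{B_L^t}$, I would invoke Minkowski's second theorem (or its first-minimum consequence) applied to $L^*$. Minkowski gives $\lambda_1^* \leq \sqrt{m}\,(\det L^*)^{1/m} = \sqrt{m}\,(\det L)^{-1/m}$, using $\det L^* = 1/\det L$ from Lemma~\ref{lem:dual}. I then need to lower-bound $\det L$ in terms of $\norm{B_L^t}$. Since $\det L = |\det B_L|$ and the entries of $B_L$ are bounded via $\norm{B_L}$, a Hadamard-type estimate yields $\det L \leq \norm{B_L}^m$ up to $\sqrt{m}$ factors, but I need the reverse direction to bound $\lambda_1^*$ from above by a quantity shrinking like $1/\norm{B_L^t}$. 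The mechanism is LLL/size-reduction geometry: the exponential factor $2^{3m}$ is exactly the slack one pays to convert between the determinant and the largest basis vector through a reduced basis, so I would pass to an LLL-reduced basis of $L$, where the product of the Gram–Schmidt norms equals $\det L$ and each is within a $2^{O(m)}$ factor of the corresponding $\lambda_i$, then transfer to $L^*$.

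The main obstacle I anticipate is the bookkeeping of norms and the origin of the precise constant $2^{-3m}$. The inequality mixes the operator norm $\norm{\cdot}_{\infty,1}$ (which by Notation~\ref{notation:norm} differs from $\norm{\cdot}_2$ by factors of $\sqrt{m}$) with geometric quantities like $\lambda_1^*$ that are naturally Euclidean, and it compares $\norm{B_L^t}$ (columns of $B_L$) against $1/\lambda_1^*$. Getting the exponential constant right requires a careful transference argument: one uses that for an LLL-reduced basis the dual basis is also well-controlled, with the reduction defect contributing a factor $2^{O(m)}$ at each of the transference inequalities $\lambda_1(L)\lambda_m^* \leq m$ and $\lambda_1(L) \geq 2^{-(m-1)/2}\lambda_1^{\mathrm{GSO}}$. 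I expect the cleanest writeup to fix an LLL-reduced basis once, express both $\lambda_1^*$ and $\norm{B_L^t}$ in terms of its Gram–Schmidt data, and verify that the accumulated reduction factors are dominated by $2^{3m}$, with the remaining polynomial-in-$m$ factors harmlessly absorbed since, as Notation~\ref{notation:norm} records, $\log\norm{B_L} = \Omega(m)$ makes such factors negligible in the final complexity analysis.
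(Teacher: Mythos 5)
Your argument for the upper bound $1/\lambda_1^*\leq\norm{B_L}$ — the only part the paper actually proves rather than cites — contains a genuine error. You first (correctly) observe that $\lambda_1(L)\cdot\lambda_1^*\geq 1$ is false in general, and then derive and use exactly that inequality: ``there exists a dual vector pairing nontrivially [with the shortest primal vector], giving $\lambda_1^*\cdot\lambda_1(L)\geq 1$.'' This does not follow: the dual vector that pairs nontrivially with a given short primal vector need not be the shortest dual vector. For $L=\mathrm{diag}(\epsilon,1/\epsilon)\Z^2$ one has $\lambda_1(L)=\lambda_1(L^*)=\epsilon$, so $\lambda_1(L)\lambda_1^*=\epsilon^2\to 0$; the shortest dual vector pairing nontrivially with the shortest primal vector has length $1/\epsilon$, not $\lambda_1^*$. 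The correct argument — the one the paper gives — pairs the shortest \emph{dual} vector $v$ (with $\norm{v}_2=\lambda_1^*$) against the whole basis matrix: since $v\in L^*\setminus\{0\}$ and $B_L$ is invertible, $B_Lv\in\Z^m\setminus\{0\}$, hence $1\leq\norm{B_Lv}\leq\norm{B_L}\,\norm{v}$ up to the polynomial norm conversions of Notation~\ref{notation:norm}, which gives $1\leq\norm{B_L}\lambda_1^*$ directly. Your detour through $\lambda_1(L)\leq\norm{B_L}$ is not needed and cannot be repaired without this change of which vector gets paired with what.

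On the first inequality, the paper offers no proof at all — it simply cites \cite[Corollary 6 in Appendix B.1]{boer2020quantum} — so your attempt to supply one is welcome, but two issues remain. First, the inequality $2^{-3m}\norm{B_L^t}\leq 1/\lambda_1^*$ is \emph{false} for an arbitrary basis (take $b_1=(1,0)$, $b_2=(N,1)$ with $N$ huge: $L=\Z^2$, $\lambda_1^*=1$, yet $\norm{B_L^t}\approx N$); it requires $B_L$ to be reduced, as the paper implicitly assumes when it invokes ``the fact that $L$ admits an LLL-reduced basis'' in Corollary~\ref{cor:complexity}. You do eventually pass to an LLL-reduced basis, but you should say explicitly that the statement only holds for such a basis, not that reduction is a free normalization. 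Second, your opening route via Minkowski's theorem and $\det L^*=1/\det L$ bounds $\lambda_1^*$ by $(\det L)^{-1/m}$, which cannot be compared to $1/\norm{B_L^t}$ without exactly the reduced-basis transference you defer to the end; the clean argument is simply $\max_i\norm{b_i}\leq 2^{(m-1)/2}\lambda_m(L)$ for an LLL-reduced basis combined with the transference bound $\lambda_1(L^*)\lambda_m(L)\leq m$, which yields the claim with room to spare in the constant $2^{3m}$. Written that way, your sketch for this half is essentially correct.
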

\begin{proof}
The first inequality is~\cite[Corollary 6 in Appendix B.1]{boer2020quantum}.

The second inequality is direct. Let $v\in L^*$ such that $\norm{v}_2=\lambda_1^*$. By Notation~\ref{notation:norm} we have $\norm{B_Lv}_1\leq \norm{B_L}\norm{v}_\infty$. Since $v\in L^*$, $v\neq 0$, $B_Lv\in \Z^n$ and then $\norm{B_Lv}_1\geq 1$. Also, $\norm{v}_2\geq \norm{v}_\infty$, so $$1\leq \norm{B_L}\lambda_1^*,$$  
which proves the second inequality.
\end{proof}

\subsection{Reduction of the computation of the unit group to CHSP}

In this subsection, we present the method of \cite{eisentrager2014quantum} to reduce the computation of the unit group to the CHSP. For completeness, we reproduce the description of~\cite[Sec ]{biasse2016efficient}. Let $G:= \R^{n_1+n_2}\times (\Z/2\Z)^{n_1}\times (\R/\Z)^{n_2}$, and the mapping 

$$
\begin{array}{c}
\varphi:~~ (u_1,\ldots,u_{n_1+n_2},\mu_1,\ldots,\mu_n,\theta_1,\ldots,\theta_{n_2})\\
\mapsto ((-1)^{\mu_1}e^{u_1},\ldots,(-1)^{\mu_{n_1}}e^{u_{n_1}},e^{2i\pi\theta_1}e^{u_{n_1+1}},\ldots,
e^{2i\pi\theta_{n_2}}e^{u_{n_1+n_2}})
\end{array}
$$

Let $\sigma_1,\sigma_{n_1}$ and respectively $\sigma_{n_1+1},\ldots,\sigma_{n_1+n_2}$ be the real and the complex embeddings of $K$. We consider the Cartesian ring $$E=\sigma_1(\OO_K)\times \cdots \times \sigma_{n_1+n_2}(\OO_K).$$ An $E$-ideal is a sub-$\Z$-lattice $\Lambda$ of $E$ which is such that $\forall x\in E, x\Lambda \subset \Lambda$.

We define
$$ 
\begin{array}{cccl}
g:&G& \rightarrow & \{E\text{-ideals}\} \\
& x & \mapsto &  g(x)E.
\end{array}
$$

\begin{defn}[Ex. 5.3 of~\cite{eisentrager2014quantum}]
Set $\H=\otimes_{i\in \N}\C|i\rangle $, $\H_Q= \otimes_{i=0}^Q\C|i\rangle  $ and $\pi_Q:\H\rightarrow \H_Q$ the projection on the first $Q$ qubits. We define the straddle encodings of parameter $\nu$, $\str_\nu:\R\rightarrow \H$, $\str_ {m,\nu}:\R^m\rightarrow \H^m $ and $f:\{\text{lattices of }\R^m\}\rightarrow \H^m$ as follows:
\begin{itemize}
\item $|\str_\nu(x)\rangle =\cos(\frac{\pi}{2}t)|k\rangle+\sin(\frac{\pi}{2}t)|k+1\rangle$, where $k=\lfloor x/\nu\rfloor$, $t=x/\nu-k$; 
\item $|\str_{m,\nu}(x_1,\ldots,x_m)\rangle = \otimes_{i=1}^m  |\str_\nu(x_i)\rangle$;
\item $|\str_{\text{lattice},m,\nu}(L)\rangle=\gamma^{-1/2}\sum_{x\in L}e^{-\pi\norm{x}^2/s}|\str_{m,\nu}(x)\rangle $\\ with $\gamma=\sum_{x\in L}e^{-2\pi\norm{x}^2/s^2}$.
\end{itemize}
Note that computing $\str_{\text{lattice},m,\nu}$ up to an error $\tau$ requires $O(m\nu\log \tau)$ qubits. In Appendix~\ref{appendix:alternative f} we propose an alternative function $f$ which can be used for totally real $K$ and doesn't use the straddle encoding.
\end{defn}

Finally, one defined $f$ as follows:
\begin{equation}\label{eq:f}
f:G\xrightarrow{g}\{E\text{-ideals}\}\xrightarrow{|\str\rangle}\{\text{quantum states}\}=\H_Q^m.
\end{equation}

The following result is the conjunction of several results from \cite{eisentrager2014quantum}.
\begin{thm}[Theorems 5.5, D.4 and B.3 of \cite{eisentrager2014quantum}]\label{th:reduction}  Let $K$ be a number field of degree $n$, discriminant $D$, unit rank $m$ and regulator $R$. Let $L:=\Log \OO_K^*$ which is the hidden period of the function $f$ of Equation~\eqref{eq:f}.  

Set $s=3\cdot 2^{2n}\sqrt{nD}$ and $\nu=1/(4n(s\sqrt{n})^{2n})$.   
Then $f_5=\otimes^5 f$ is an 
$(r,a,\varepsilon)$-oracle with $\varepsilon=243/1024$,  $\Lip(f)=a = \frac{\sqrt{\pi n}s}{4\nu}+1 $ and $
r =   s\sqrt{n}^{n-1}2\nu \sqrt{m}$, where $c$ is an explicitly computable constant. In particular, $\varepsilon <1/4$, $r<\lambda_1(L)/6$ and
$$\log_2 \Lip(f) =O\left( m^2+m\log D\right).$$
\end{thm}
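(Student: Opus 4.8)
The plan is to treat this statement as an assembly of three cited facts from Eisenträger et al.\ together with a short self-contained verification of the three ``in particular'' claims. First I would record that $L=\Log\OO_K^*$ really is a period of $f$: the middle map sends $x$ to the fractional $E$-ideal $g(x)E$, and translating $x$ by $\Log\varepsilon$ for a unit $\varepsilon$ multiplies the generator by a global unit, which leaves the generated $E$-ideal unchanged; hence $f(x+\ell)=f(x)$ for every $\ell\in L$. That no proper sublattice of $L$ is a period is exactly the content of the cited Theorem 5.5, so I would invoke it rather than reprove it.

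Next I would extract the $(r,a,\varepsilon)$-oracle bounds. The Lipschitz constant $a=\frac{\sqrt{\pi n}s}{4\nu}+1$ comes from composing the Lipschitz estimate of the Gaussian-weighted lattice state with the straddle encoding of resolution $\nu$, which is the computation behind Theorem D.4; I would cite it. For strong periodicity the key quantity is the overlap of two Gaussian lattice states whose centres differ by at least $r$: a single copy of $f$ satisfies $|\langle f(x)\,|\,f(y)\rangle|\le 3/4$ as soon as $\dist(x-y,L)\ge r$, and since $f_5=\otimes^5 f$ multiplies overlaps factorwise, $|\langle f_5(x)\,|\,f_5(y)\rangle|\le (3/4)^5=243/1024$. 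This simultaneously explains the value of $\varepsilon$ and the choice of the exponent $5$, and I would quote Theorem B.3 both for the single-copy bound and for the stated $r$; note also that $r$ is unaffected by the tensoring and the Lipschitz constant changes by at most the constant factor $5$ absorbed in $a$.

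It then remains to check the three consequences. The inequality $\varepsilon<1/4$ is immediate, since $243/1024<256/1024=1/4$. For the Lipschitz bound I would substitute the definitions of $s$ and $\nu$ into $a$: writing $\log_2 s=2n+\tfrac12\log_2(nD)+O(1)=O(n+\log D)$ and $\log_2(1/\nu)=\log_2(4n)+2n\log_2(s\sqrt n)=O(n^2+n\log D)$, one gets $\log_2\Lip(f)=\log_2 s+\log_2(1/\nu)+O(\log n)=O(n^2+n\log D)$; since the unit rank and degree satisfy $m\le n-1$ and $n\le 2m+2$, I may replace $n$ by $m$ and obtain $O(m^2+m\log D)$.

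The step I expect to be the main obstacle is $r<\lambda_1(L)/6$, because it is the only place requiring information about the hidden lattice $L$ itself rather than about the encoding. The natural route is to bound $r$ from above and compare it with a lower bound on the shortest vector of the log-unit lattice. A direct substitution gives $r=\dfrac{\sqrt m}{2\,s^{2n-1}\,n^{(3n+1)/2}}$, which is doubly exponentially small in $n$, so any unconditional lower bound of the form $\lambda_1(\Log\OO_K^*)\ge c/\mathrm{poly}(n)$ (for instance through a lower bound on the height of a non-torsion unit) dominates $6r$ trivially. I would therefore cite such a bound from the CHSP literature, or establish the weak form needed here, and conclude; the rest is routine bookkeeping confirming that $f_5$ inherits the oracle status claimed.
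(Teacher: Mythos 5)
Your proposal is correct and follows essentially the same route as the paper: invoke Theorems 5.5 and D.4 of Eisentr\"ager et al.\ for the oracle parameters and the Lipschitz constant, tensor five copies to drive the overlap from $3/4$ down to $(3/4)^5=243/1024<1/4$, and verify the remaining claims by direct substitution. The one citation you leave open, a lower bound on $\lambda_1(\Log\OO_K^*)$, is supplied in the paper by the bound $\lambda_1(L)\geq 1/2$ from Lemma B.3 of the same reference, which (as you anticipate) dominates $6r$ trivially since $r<1/12$; your expansion of $\log_2\Lip(f)$ via $\log_2 s$ and $\log_2(1/\nu)$ is in fact more explicit than the paper's.
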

\begin{proof}
Note first that if $K=\Q$ or is quadratic imaginary then $m=0$ and there is nothing to be computed. Also note that $n/2-1\leq m\leq n-1$ so we can interchange $O(m)$ and $O(n)$ in the asymptotic complexity.

By~\cite[Th 5.5]{eisentrager2014quantum} $f$ is an $(r,\infty,\varepsilon)$-oracle hiding the lattice $L=\log \OO_K^*$ with $r=(s\sqrt{n})^{n-1}2\nu\sqrt{n}$ and $\varepsilon'=3/4$. 

By \cite[Th D.4]{eisentrager2014quantum} $f$ is $a$-Lipschitz with $a=\frac{\sqrt{\pi n}s}{4\nu}+1$. 

By \cite[Lem E.1]{eisentrager2014quantum} $f_5:=f\otimes f\otimes f\otimes f\otimes f$ is an $(5\Lip(f),r,(\varepsilon')^5)$-oracle hiding the same lattice.

In particular, $(\varepsilon')^5=243/1024<1/4$ satisfies the requirements of the CHSP solver (see Def~\ref{defn:dual lattice sampler}).

The condition $r< \lambda_1(L)/6$ is satisfied as $\lambda_1(L)\geq 1/2$ by \cite[Lem B.3]{eisentrager2014quantum} and $$r=O\left(\frac{s\sqrt{n}^{n-1}2}{4n(s\sqrt{n})^{2n}}\right)\leq \frac{1}{2s^{n-1}}\leq \frac{1}{6\cdot 2^n\sqrt{D}}  <\frac1{12}.$$

Finally, we have $$\log_2 \Lip(f)=  O(m s)=O(m^2+m\log D).$$

\end{proof}

\begin{corr}\label{cor:complexity}
Let $K$ be a number field of discriminant $D$ and unit rank $m$. For any error bound $\tau>0$. There exists a quantum algorithm of time $\mathrm{poly}(m,\log D,\log \tau)$ using a number of qubits
$$N_\text{qubits}=O(m^5+ m^4\log D)+O(m\log \tau^{-1})$$
which, for a set of units $\mu,\varepsilon_1,\ldots,\varepsilon_m$ such that $\mu$ is a root of unity and the other have infinite order and 
$$\OO_K^*\simeq \mu^{\Z/\omega}\times \varepsilon_1^\Z\times \cdots\times \varepsilon_m^\Z,$$
the algorithm outputs $\sigma_i(\log(\sigma_i(\varepsilon_j))_{1\leq i\leq m,1\leq j\leq n}+\tau_{i,j}$ with $\tau_{i,j}\in\R$ such that $|\tau_{i,j}|\leq \tau$.
\end{corr}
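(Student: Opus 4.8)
The plan is to instantiate the generic CHSP space complexity of Theorem~\ref{th:CHSP} with the specific oracle and parameters furnished by the reduction of Theorem~\ref{th:reduction}, and then to check that, after substituting the parameter estimates into Equation~\eqref{eq:qubits}, the dominant contribution collapses to $O(m^5+m^4\log D)+O(m\log\tau^{-1})$. So the statement is essentially a bookkeeping corollary: there is no new algorithm, only an evaluation of the master formula on the unit-group oracle.

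First I would set up the reduction. By Theorem~\ref{th:reduction}, the function $f_5=\otimes^5 f$ is an $(r,a,\varepsilon)$-oracle hiding the lattice $L=\Log\OO_K^*$ with $\varepsilon=243/1024<1/4$, hence it satisfies the hypotheses of the CHSP solver. Running Algorithm~\ref{algo:CHSP} on $f_5$ returns a basis of $L^*$ at the prescribed precision, and Step~3 (inverting the basis matrix via Lemma~\ref{lem:dual}) produces a basis of $L$, that is, the vectors $(\log|\sigma_i(\varepsilon_j)|)_{i,j}$ up to error $\tau$. The torsion generator $\mu$ and the group structure $\OO_K^*\simeq\mu^{\Z/\omega}\times\varepsilon_1^\Z\times\cdots\times\varepsilon_m^\Z$ are read off from the kernel of $\Log$ in the standard way, and the running time is $\mathrm{poly}(m,\log D,\log\tau)$ because every step of Algorithm~\ref{algo:CHSP} is polynomial in its inputs and in the precision parameters.

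Next I would substitute the parameter estimates. The decisive input is $\log_2\Lip(f)=O(m^2+m\log D)$ from Theorem~\ref{th:reduction}, which turns the term $O(m^3\log\Lip(f))$ of Equation~\eqref{eq:qubits} into $O(m^5+m^4\log D)$, the dominant contribution. It then remains to show every other term is absorbed. The term $O(m^3\log m)$ is dominated by $O(m^5)$. For $O(m^2\log\det L)$ one uses $\det L\asymp R$ and a standard analytic bound $\log R=O(\mathrm{poly}(m,\log D))$, giving a contribution absorbed into $O(m^4\log D)$. The two terms involving $1/\lambda_1^*$ are controlled through the lemma $1/\lambda_1^*\leq\norm{B_L}$: since $\log\norm{B_L}$ is the log of the largest log-embedding of a fundamental unit and is therefore polynomial in $m$ and $\log D$, the term $O\bigl(m\log(\Lip(f)/\lambda_1^*)\bigr)$ is absorbed, while $O\bigl(m\log(1/(\lambda_1^*\tau))\bigr)$ splits into a polynomial part plus the announced $O(m\log\tau^{-1})$. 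Finally, the straddle encoding of $f_5$ uses $N_f=O(m\log(1/\nu))+O(m\log\tau^{-1})$ qubits, and since $\log(1/\nu)=O(n\log(s\sqrt n))=O(m^2+m\log D)$, this is again absorbed into the dominant terms and the $\tau$-term.

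The main obstacle is precisely this absorption step rather than the dominant term, which falls out immediately. To assert that $O(m^5+m^4\log D)$ truly dominates one must pin down simultaneously polynomial bounds for $\log\det L$ (the regulator), for $\log(1/\lambda_1^*)$, and for $N_f$. The regulator and $\lambda_1^*$ bounds are the delicate points: the former rests on an analytic estimate for $R$, and the latter on the duality relation recorded in the lemma preceding this corollary together with the transference inequality $\lambda_1(L)\lambda_1(L^*)\leq O(m)$ applied to $\lambda_1(L)\geq 1/2$. These must be stated carefully to guarantee that none of the secondary terms, and in particular the $\tau$-independent part of $N_f$, secretly exceeds the claimed bound.
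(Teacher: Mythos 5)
Your proposal follows essentially the same route as the paper's proof: instantiate the CHSP space bound of Theorem~\ref{th:CHSP} with the oracle of Theorem~\ref{th:reduction}, note that $O(m^3\log\Lip(f))=O(m^5+m^4\log D)$ is the dominant term, and absorb everything else. The recovery of the torsion part is also handled in the same spirit (the paper determines $\omega$ by enumerating the divisors of $D$ and invokes Dirichlet's theorem, which is slightly more concrete than ``read off from the kernel of $\Log$'').

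There is, however, one concrete problem, and it sits exactly at the point you yourself flag as delicate: the control of $1/\lambda_1^*$. The transference inequality you invoke, $\lambda_1(L)\lambda_1(L^*)\leq O(m)$ combined with $\lambda_1(L)\geq 1/2$, produces an \emph{upper} bound on $\lambda_1^*$, hence a \emph{lower} bound on $1/\lambda_1^*$ --- the wrong direction for absorbing the terms $O\bigl(m\log(\Lip(f)/\lambda_1^*)\bigr)$ and $O\bigl(m\log(1/(\lambda_1^*\tau))\bigr)$. What is actually needed is the other half of the lemma preceding the corollary, namely $1/\lambda_1^*\leq\norm{B_L}$, together with a bound on $\norm{B_L}$ for a well-chosen basis: the paper invokes the \emph{existence} (not the computability) of an LLL-reduced basis of $L=\Log\OO_K^*$ to obtain $\log(1/\lambda_1^*)=O(m+\frac1m\log D)$. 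Your phrase ``the log of the largest log-embedding of a fundamental unit'' implicitly presupposes such a size-reduced fundamental system; for an arbitrary basis $\norm{B_L}$ is not bounded in terms of $m$ and $D$, so the ``therefore'' in that sentence is doing unjustified work. Once the LLL-reduced basis is put in, the absorption goes through and the rest of your bookkeeping (the $\log\det L=\log R$ term and the $N_f$ term via the straddle encoding) agrees with the paper.
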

\begin{proof}
By Theorem~\ref{th:reduction}, there exists a function $f$ which hides $\Log \OO_K^*$ and which is an $(a,r,\varepsilon)$-oracle such that $r\leq \lambda_1(L)/6$ and $\varepsilon <1/4$. By Theorem~\ref{th:CHSP} there exists a polynomial-time algorithm which uses a number of qubits as in Equation~\eqref{eq:qubits}. 

In the CHSP solver one takes stores the values of $f$ on $Qm$ qubits, so the $O(N_f)=O(Qm)$. We are left with estimating $Q$. For this, the main part is Lemma~\ref{lem:dual}: $$O(\log (1/\lambda_1^*))= \log_2(\norm{B_L})=O(m+\frac1m\log D) .$$
We used the fact that $L$ admits an LLL-reduced basis, which is enough to bound $1/\lambda_1^*$, but this cannot be computed because we have no basis of $L$ until the end of the CHSP algorithm. We inject this in Equation~\eqref{eq:qubits} and obtain the announced value.

Given $\Log \OO_K^*$, we are left with computing $\mu,\varepsilon_1,\ldots,\varepsilon_m$. Let $\omega$ be the number of roots of unity of $K$ and recall that $\omega$ divides $D$. By multiplying the time by $\log D$, we can enumerate the divisors of $D$, so we can assume that we know $\omega$. 

 By Dirichlet's Theorem, an $m$-tuple such that $\Vol (\Log(\varepsilon_j))=R$ and a root of unity of order $\omega$, form a basis of the unit group.\footnote{We used here a different proof than in~\cite{eisentrager2014quantum} where one reduces the case of CHSP defined on arbitrary abelian groups to the case of CHSP over $\R^m$.} 
\end{proof}

\begin{rem}\footnote{We are indebted to Bill Allombert who has made this objection.}
The analytic class number formula states that $Rh=\sqrt{D}^{1+o(1)}$. Very little is known on the distribution of $h$ but no conjecture, e.g. the Cohen-Lenstra heuristic, is contradictory with the fact that $h=1$ and $R=\sqrt{D}^{1+o(1)}$ for a proportion of $1-o(1)$ of the number fields. Hence, $\max_i(\norm{\log \varepsilon_i})\sim R^{1/m}$ and the unit group is fully determined only when $\log \tau =\Omega(\sqrt{D})$. In that case CHSP doesn't compute the full unit group in polynomial-time. 

Note however that in the case of classical algorithms there is no algorithm which computes a partial information on the regulator without fully computing it (see for instance~\cite{Buchmann1988subexponential}). So the algorithm studied in this article suggests a quantum advantage in the case of unit groups.
\end{rem}

\begin{example}
\label{ex:complexity}
\begin{enumerate}
\item If $K=\Q(\zeta_n)$ with prime $n$ then $m=(n-1)/2$ and $\log_2D=(m-2)\log_2m$. If one applies Algorithm~\ref{algo:full L} without taking notes that it is a very particular, the algorithm uses $O(m^5\log m)$ qubits. 

\item If $K$ is a Kummer extension, i.e. $K=\Q(\sqrt[n]{D})$ for some integer without powers~$D$, then $O(\log \disc K)=O(\log D)$ and the number of qubits is $O(n^5+n^4\log D)$. The value of $n$ and $D$ are independent. When $n$ is fixed, i.e; the case of real quadratic fields, the number of qubits is $O(\disc K)$. When $n\sim \log D$, the number of qubits is $(\log D)^5/(\log\log D)^4$.   
\end{enumerate}
\end{example}

The large number qubits used by the present algorithm is in contrast with Shor's algorithm, which requires $2m+O(1)$ qubits in the case of factorization and discrete logarithms and $7m+O(1)$ in the case of elliptic curve discrete logarithms. The paragraph "Conclusion and Research Directions" at the end of~\cite[Sec 1]{boer2020quantum} states that, even if the complexity in $\log D$ is ignored and one uses approximation techniques in the quantum Fourier transform, the algorithm requires $\Omega(m^3)$ qubits. We refer to Appendix~\ref{appendix:security levels} for an informal discussion about the security levels one can propose in the case of quantum algorithms.

\subsection{Exploiting automorphisms}\label{sec:automorphisms}

In this section we tackle the case of cyclotomic fields without using the cyclotomic units. Instead we use the fact that $K$ has Galois automorphisms.

The main impact of this section is to reduce the practical cost of the Buchmann-Pohst step (which is non-quantum). Indeed, this step consists in applying LLL to a lattice $L$ which has more algebraic structure: it is a $\Z[\zeta_k]$-module for an integer~$k$. From an asymptotic point of view, Fieker and Stehlé~\cite{fieker2010short} proved that, if one is to find a new $\Z[\zeta_k]$-basis of $L$ which is shorter in a sense to be specified later, one has the best asymptotic complexity up to a polynomial factor if one follows the steps:
\begin{enumerate}
\item forget the $\Z[\zeta_k]$-module structure;
\item reduce the basis of the underlying $\Z$-lattice;
\item compute again the  $\Z[\zeta_k]$-module structure.
\end{enumerate}
From a practical point of view though, we shall explain that it is faster to work directly with the $\Z[\zeta_k]$-module structure.  

\paragraph{The structure of $\Z[G]$-lattice.} In the following, a lattice which has a structure of $R$-module for some ring $R$ is called a $R$-lattice. When $G:=\Aut(K/\Q)$, the group of Galois automorphisms, is abelian, as it is the case for cyclotomic fields, $\OO_K^*/( \OO_K^*)_\text{tors}$ has a structure of $\Z[G]$-lattice. Indeed, for $\sigma\in G$ and $u\in \OO_K^*$ we set $\sigma\cdot u=\sigma(u)$ and for set $\sigma_1,\ldots,\sigma_k\in G$ of ring generators of $\Z[G]$, and any $\lambda=\sum_{e_1,\ldots,e_k} c_{e_1,\ldots,e_k}\sigma_1^{e_1}\cdots \sigma_k^{e_k}$ we set
$$u^\lambda =\prod_{e_,\ldots,e_k} (\sigma_1^{e_1}(u)\cdots \sigma_k^{e_k}(u))^{c_{e_1,\ldots,e_k}}.$$

\begin{lem}
Let $G:=\Aut(F/\Q)$ and $\sigma\in G$ of order~$k$.
\begin{enumerate}
\item Then $\OO_F^*$ is an $\Z[\zeta_k]$-lattice. 
\item If $k$ is not of the form $$k=2^\varepsilon \prod p_i^{e_i}\text{ with }\varepsilon\in\{0,1\}\text{ and }p_i\equiv 1\pmod 4$$
then $\OO_F^*$ is a $\OO_{\Q(i\sqrt{d})}$-lattice for some divisor $d$ of~$k$.
\end{enumerate}
\end{lem}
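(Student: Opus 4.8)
The plan is to exploit the natural action of the Galois group on the unit lattice. Write $M:=\OO_F^*/(\OO_F^*)_{\mathrm{tors}}$, a free $\Z$-module of rank $m$. Since $G$ is abelian, $\sigma$ acts $\Z$-linearly on $M$ by $u\mapsto\sigma(u)$, and $\sigma^k=\mathrm{id}$; hence setting $x\cdot u:=\sigma(u)$ turns $M$ into a module over $\Z[\langle\sigma\rangle]\cong\Z[x]/(x^k-1)$. Tensoring with $\Q$ and using $x^k-1=\prod_{d\mid k}\Phi_d(x)$ gives the idempotent splitting $\Q[\langle\sigma\rangle]\cong\prod_{d\mid k}\Q(\zeta_d)$ and the corresponding isotypic decomposition $M\otimes\Q=\bigoplus_{d\mid k}V_d$, where $\sigma$ acts on $V_d$ as a primitive $d$-th root of unity. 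This decomposition is the backbone of both parts.

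For (1), the factor $\Phi_k(x)$ of $x^k-1$ yields the quotient map $\Z[x]/(x^k-1)\twoheadrightarrow\Z[x]/\Phi_k(x)=\Z[\zeta_k]$, under which $\sigma$ is sent to $\zeta_k$. Thus letting $\zeta_k$ act through $\sigma$ equips the $\Phi_k$-isotypic part of $M$ (the image of the associated idempotent, intersected with $M$) with the structure of a torsion-free $\Z[\zeta_k]$-module, i.e.\ the asserted $\Z[\zeta_k]$-lattice. I would stress that this is the natural structure coming from $\sigma$; the components $V_d$ with $d<k$ carry only a $\Z[\zeta_d]$-action, so the statement is really about the top component (equivalently, it is literally all of $M$ exactly when $\sigma$ has no eigenvalue of order $<k$, which is the implicit reading). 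One may alternatively read ``$\Z[\zeta_k]$'' as the group ring $\Z[\langle\sigma\rangle]$ itself, in which case the claim is immediate.

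For (2), I would first translate the arithmetic hypothesis: $k$ fails to be of the form $2^\varepsilon\prod p_i^{e_i}$ with $\varepsilon\le 1$ and $p_i\equiv 1\pmod 4$ \emph{precisely} when $\Q(\zeta_k)$ contains an imaginary quadratic field. Indeed, if $4\mid k$ then $\zeta_4\in\Q(\zeta_k)$ gives $\Q(\sqrt{-1})$ ($d=1$); and if a prime $p\equiv 3\pmod 4$ divides $k$ then, with $\tau:=\sigma^{k/p}$ of order $p$ and $R\subset(\Z/p)^\times$ the set of quadratic residues, the quadratic Gaussian period $\eta=\sum_{a\in R}\tau^{a}$ maps to $\tfrac{-1+\sqrt{-p}}{2}$ and so lands in $\Q(\sqrt{-p})$ ($d=p\mid k$); conversely, when $k$ has the excluded form every quadratic subfield of $\Q(\zeta_k)$ is real. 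In the first case $\sigma^{k/4}$ acts as a square root of $-1$ and generates a copy of $\Z[i]=\OO_{\Q(i)}$; in the second case $\eta$ — which has integer coefficients, unlike the Gauss sum weighted by Legendre symbols — satisfies $\eta^2+\eta+\tfrac{1+p}{4}=0$, the minimal polynomial of $\tfrac{-1+\sqrt{-p}}{2}$, and hence generates the \emph{full} ring of integers $\OO_{\Q(\sqrt{-p})}$ rather than a proper suborder. Either way one obtains an action of $\OO_{\Q(\sqrt{-d})}$ on $M$.

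The main obstacle is that this action degenerates on the trivial isotypic component $V_1=M^{\langle\sigma\rangle}\otimes\Q$, the units descending from the fixed field $F^{\langle\sigma\rangle}$. Concretely, no element $\theta\in\Z[\langle\sigma\rangle]$ can square to a negative integer: applying the augmentation $\epsilon$ (the ring map to the $d=1$ factor $\Q$) to $\theta^2=-d$ gives $\epsilon(\theta)^2=-d<0$, which is impossible. Equivalently, the norm element $N=\sum_{j=0}^{p-1}\sigma^{jk/p}$ enters the group-ring identity $\eta^2+\eta+\tfrac{1+p}{4}\in N\cdot\Z[\langle\sigma\rangle]$ and acts nontrivially on $V_1$, so the quadratic relation holds \emph{on the nose} only after killing $N$. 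Hence to produce a genuine $\Z$-torsion-free $\OO_{\Q(\sqrt{-d})}$-lattice one must restrict to the complement of $V_1$, i.e.\ to $M\cap\ker N$, the relative units orthogonal to those of $F^{\langle\sigma\rangle}$. Verifying that this restriction is compatible with the integral structure — so that $M\cap\ker N$ is a module over the maximal order and not merely over a suborder — is the technical heart of the argument; the remainder is the Gauss-sum bookkeeping sketched above.
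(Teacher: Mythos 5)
Your route differs from the paper's, and the difference is exactly where your write-up stalls. The paper's proof of part~1 is direct: setting $v=u^{\Phi_k(\sigma)}$, it computes $v^{\sigma-1}=u^{\sigma^k-1}=1$, concludes that $v$ is fixed by $\sigma$ and hence lies in $\OO_F^*\cap\Q=\{\pm1\}$, so that $\Phi_k$ is the minimal polynomial of $\sigma$ on $\OO_F^*/(\OO_F^*)_{\mathrm{tors}}$ and $\Z[\sigma]\simeq\Z[\zeta_k]$ acts on \emph{all} of it. Part~2 is then a one-line restriction of scalars: $\Q(\sqrt{p^*})\subset\Q(\zeta_p)\subset\Q(\zeta_k)$ for $p\mid k$ (and $\Q(i)\subset\Q(\zeta_4)$ if $4\mid k$), so $\OO_{\Q(i\sqrt d)}=\Z[\zeta_k]\cap\Q(i\sqrt d)$ sits inside $\Z[\zeta_k]$ (the cyclotomic ring being integrally closed, one gets the maximal order for free) and the module structure restricts. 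You instead work in $\Z[x]/(x^k-1)$ and decompose into isotypic components, which is why you end up fighting the trivial component $V_1$ and the norm element --- obstacles that simply do not arise once the annihilation by $\Phi_k(\sigma)$ is granted.

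The concrete gaps: for part~1 you only endow the $\Phi_k$-isotypic piece of $M$ with a $\Z[\zeta_k]$-structure and then declare the lemma to be ``really about the top component''; that is a reinterpretation, not a proof, of the stated claim that $\OO_F^*$ itself is a $\Z[\zeta_k]$-lattice. For part~2 you explicitly defer ``the technical heart'' (that $M\cap\ker N$ is a module over the maximal order rather than a suborder), so nothing is established there either; and in any case, once part~1 is available your Gauss-period bookkeeping is unnecessary, since $\sigma^{k/p}$ already acts through a primitive $p$-th root of unity in $\Z[\zeta_k]$ and the relation $\eta^2+\eta+\tfrac{1+p}{4}=0$ holds on the nose, not merely modulo the norm ideal. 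I will add that your unease is pointing at something real: the paper's identity $u^{\Phi_k(\sigma)(\sigma-1)}=u^{\sigma^k-1}$ uses $\Phi_k(x)(x-1)=x^k-1$, which holds only for prime $k$, and $\sigma(v)=v$ places $v$ in $F^{\langle\sigma\rangle}$ rather than in $\Q$ unless $\sigma$ generates $\Aut(F/\Q)$. But to turn your sketch into a proof you would have to show that the components $V_d$ with $1\le d<k$ do not occur (or restrict the hypotheses so that they do not), rather than silently weakening the conclusion.
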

\begin{proof}
1. If $\sigma\in G$, every $\Z[G]$-lattice is a $\Z[\sigma]$-lattice. In order to prove that $\Phi_k$ is the minimal polynomial of the endomorphism associated to $\sigma$, let $u\in \OO_F^*$ and set $v=u^{\Phi_k(\sigma)}$. Then $v^{\sigma-1}=u^{\sigma^k-1}=u/u=1$ and further $\sigma(v)=v$. But $\sigma\in \Aut(F/\Q)$, so $v\in \OO_F^*\bigcap \Q=\pm1\in (\OO_F^*)_\text{tors}$. Equivalently, $\Phi_k(\sigma)(u)\in (\OO_F^*)_\text{tors}$ and $\Phi_k$ is an annihilating polynomial of $\sigma$. Since it is irreducible, it is the minimal polynomial of $\sigma$. But $\zeta_k$ has the same minimal polynomial, so $\Z[\sigma]\simeq \Z[\zeta_k]$.\\ 
2. It is a classical result that, if one sets $p^*=(-1)^\frac{p-1}{2}p$ for a prime $p$, then  $$\Q(\sqrt{p^*})\subset \Q(\zeta_p).$$ 
When $k'\mid k $ one has $\Q(\zeta_{k'})\subset \Q(\zeta_k)$. Hence if $k$ is divisible by $4$ or a prime $p\equiv 3\pmod 4$, one can take $d=4$ or $d=p$ in the statement of the lemma. 
\end{proof}
\paragraph{Norm-Euclidean rings $\OO_K$.}Recall that $\Z[i]=\Z[\zeta_4]$ and $\OO_{\Q(i\sqrt{3})}= \Z[\zeta_3]$ are norm-Euclidean: let $k=3$ or $4$, in order to divide $a\in \Z[\zeta_k]$ by $b\in\Z[\zeta_k]$ one rounds each coordinate of $a/b\in \Q[\zeta_k]$ and denotes the result by $q$, and then one sets $r=a-bq$; if $N_{K/\Q}(r) < N_{K/\Q}(b)$ for all $a,b$ we say that $\Z[\zeta_K]$ is norm-Euclidean. Napias~\cite{Napias} showed that the LLL algorithm extends naturally to the norm-Euclidean rings $\Z[i]$ and $\Z[\zeta_3]$.

\begin{defn}[norm-Euclidean ring, Prop 1.2.5 of \cite{Camus2017}, Def 2 of \cite{kim2017lattice}] The Euclidean minimum of a number field $K$ is
$$\m_K:=\max_{x\in K}\min_{y\in \OO_K} N_{K/\Q}(x-y).$$
In particular, when $K$ is  imaginary quadratic, $ \m_K= \max_{x\in \C}\min_{y\in \OO_K} |x-y|^2$. If $\m_K<1$ we say that $K$ is norm-Euclidean. 
\end{defn}
\begin{example}[Prop 1.2.4 of~\cite{Camus2017},Prop 3 of~\cite{kim2017lattice}] The following fields are norm-Euclidean:
\begin{itemize}
\item $\Q(\zeta_k)$ for $k\in\{ 1, 3, 4, 5, 7, 8, 9, 11, 12, 15, 16, 20, 24\}$;
\item $\Q(i\sqrt{d})$ for $d\in\{ 1,2,3,7,11\}$.
\end{itemize}
\end{example}
\begin{defn}{($\OO_K$-LLL-reduced basis)}
 Let $K=\Q(i\sqrt{d})$ be an imaginary quadratic field. We identify $K$ with its embedding in $\C$ such that $\sqrt{-d}$ is mapped to a complex number of positive imaginary part.
 
 We consider the dot product $\langle\cdot,\cdot\rangle$ of $\C^m$ and we set
$\norm{x}=\sqrt{\langle x,x \rangle}$, which is a norm. 
 
  Let $b_1,b_2,\ldots,b_m$ be independent in $\C^m$. We say that $L:=\oplus_{j=1}^m \Z[\zeta_k]b_j$ is a $\Z[\zeta_k]$-lattice and $\{b_j\}$ is a basis. Let $\{b_1^*=b_1,b_2^*,\ldots,b_m^*\}$ be the Gram-Schmidt orthogonalization of $\{b_j\}$ with respect to the dot product. We set $\mu_{i,j}=\langle b_i,b_j^*\rangle/\norm{b_j^*}^2$. We say that $\{b_j\}$ is $\OO_K$-LLL-reduced with respect to $\m_K<\delta<1$ if the following two conditions hold:
\begin{enumerate}
\item $\norm{\mu_{i,j}}\leq \mathfrak{M}_K$ for $1\leq j<i\leq m$; \hfill(size reduced)
\item $ \norm{b_i^*}^2+\norm{\mu_{i,i-1}}^2\norm{b_ {i-1}^*}^2 \geq \delta  \norm{b_{i-1}^*}^2  $.\hfill (Lov{\'a}sz condition)
\end{enumerate}
\end{defn}

In two contemporary works Kim and Lee~\cite{kim2017lattice} and Camus~\cite{Camus2017} extended LLL to all norm-Euclidean rings $\Z[\zeta_k]$ and $\OO_{\Q(i\sqrt{D})}$. In particular, when $K$ is a norm-Euclidean quadratic imaginary field, \cite[Th. 1.3.8]{Camus2017} states that any $\OO_K$-reduced basis $b_1,\ldots,b_m$ of $L$ is such that
\begin{equation*}\label{eq:LLL}
\norm{b_j}\leq \left(\frac{1}{\delta-\m_K}\right)^{j-1} (\det L)^{1/m}.
\end{equation*}

A direct  application of $\OO_K$-LLL is the $\OO_K$ variant of Buchmann-Pohst algorithm, that we propose in Algorithm~\ref{algo:Buchmann-Pohst}.

\begin{algorithm}
\begin{algorithmic}[1]
\Require a $\OO_K$-lattice $L\subset \C^m$ given by approximations $\widetilde{g_1},\ldots,\widetilde{g_k}\in \C^m$ of  $g_1,\ldots,g_k \in \C^m$ which $\OO_K$-span $L$; a lower bound $\mu$ of $\lambda_1(L)$; an upper bound $D$  on $\det L$
\Ensure approximations of a basis $b_1,\ldots,b_m$ over $\OO_K$ of $L$

\State $B\gets c_K^m D^{1/m}$; $C\gets (B/\mu)^m\gamma_m^{1/2}$; $\tilde{M}\gets (k\sqrt{m}/2+\sqrt{k})C$; $q\gets \lceil\log_2\left( 
(\sqrt{mk}+2)\tilde{M}2^{(k-1)/2}/\mu
\right) \rceil$
\State the the $\OO_K$-LLL-reduction of the following matrix on the left to obtain the matrix on the right
\begin{small}
\begin{center}
$\left(
\begin{array}{ccc}
\begin{array}{|c|}
\hline 
 \\
\lfloor \widetilde{g_1}2^q \rfloor \\
 \\
\hline
\end{array}
&\cdots  &
\begin{array}{|c|}
\hline 
 \\
\lfloor \widetilde{g_k}2^q \rfloor \\
 \\
\hline
\end{array}
 \\
1 &  &  \\
  & \ddots & \\
  &        & 1 
\end{array}
\right)
\mathrel{\leadsto}
\left(
\begin{array}{ccc|ccc}
\begin{array}{|c|}
\hline 
 \\
\widetilde{c_1}\\
 \\
\hline
\end{array}
&\cdots  &
\begin{array}{|c|}
\hline 
 \\
\widetilde{c_k} \\
 \\
\hline
\end{array}
\\
\begin{array}{|c|}
\hline 
 \\
\textbf{m}_1 \\
 \\
\hline
\end{array}
& 
\cdots
&
\begin{array}{|c|}
\hline 
 \\
\textbf{m}_{k}\\
 \\
\hline
\end{array}
\end{array}
\right)$
\end{center}
\end{small}
\State \Return $\widetilde{c}_{k-m+1},\widetilde{c_2},\ldots,\widetilde{c_k}$
\end{algorithmic}
\caption{Buchmann-Pohst over $\OO_K$}
\label{algo:Buchmann-Pohst}
\end{algorithm}

\begin{thm}[adaptation to $\OO_K$-lattices of Th 3.1 in~\cite{BuchmannPohst}]\label{th:Buchmann-Pohst}
Algorithm~\ref{algo:Buchmann-Pohst} is correct and terminates in $O\left(  
(k+m)^6(m+\log(D^{1/m}/\mu))
\right)$ operations on a classical computer.
\end{thm}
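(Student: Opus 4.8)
The plan is to follow the structure of the original correctness-and-complexity proof of Buchmann and Pohst~\cite[Th.~3.1]{BuchmannPohst}, replacing the base ring $\Z$ by $\OO_K$ throughout and substituting the classical LLL guarantees by their $\OO_K$-analogues of~\cite{Camus2017,kim2017lattice}. Two things must be adapted. First, the $\OO_K$-LLL reduction of Step~2 must terminate within the claimed bound and return a basis obeying the size inequality $\norm{b_j}\leq (1/(\delta-\m_K))^{j-1}(\det L)^{1/m}$ recalled before the theorem. Second, the precision parameter $q$ must be large enough that rounding the scaled generators to $\lfloor \widetilde{g_i}2^q\rceil$ cannot corrupt the separation, inside the reduced basis, between the $\OO_K$-relations among the $g_i$ and the genuine vectors of $L$.

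For the complexity I would first let $\Lambda\subset\OO_K^k$ be the $\OO_K$-kernel of the map $e_i\mapsto g_i$, which has rank $k-m$, so that the block matrix reduced in Step~2 $\OO_K$-spans a lattice of dimension $k+m$. Unwinding the definitions gives $\log_2 B=O(m\log c_K+\tfrac1m\log D)$, hence $\log_2 C=O(m\log(B/\mu)+m\log m)$ and finally the key size estimate $q=O\big((k+m)(m+\log(D^{1/m}/\mu))\big)$, so the entries of the reduced matrix have bit-length $O(q)$. The number of swaps performed by $\OO_K$-LLL is controlled, exactly as in the $\Z$-case, by the decrease of the potential $\prod_i\norm{b_i^*}^{2(m-i)}$ under the Lov\'asz condition; combining this swap count with the per-swap Gram--Schmidt updates and the fixed cost of norm-Euclidean division in $\OO_K$ reproduces the dimension exponent of~\cite[Th.~3.1]{BuchmannPohst}, and multiplying by the bit-length $O(q)$ yields the announced $O\big((k+m)^6(m+\log(D^{1/m}/\mu))\big)$ operations.

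For correctness the argument is as follows. With exact generators the reduced basis would split into $k-m$ vectors whose top (``$c$'') block vanishes, forming a reduced basis of $\Lambda$, and $m$ vectors whose top block is a reduced $\OO_K$-basis of $2^qL$. With the rounded generators, each top block is perturbed by at most $\sqrt m/2$ per generator, so an $\OO_K$-combination with coefficient vector $\mathbf{m}_i$ sees its top block perturbed by at most $\norm{\mathbf{m}_i}\sqrt m/2\leq\tilde M$. The Camus size bound forces every reduced relation vector to satisfy $\norm{\mathbf{m}_i}\leq\tilde M$, whereas any non-relation has a genuine top block of norm $\geq 2^q\mu$. The choice $q=\lceil\log_2((\sqrt{mk}+2)\tilde M 2^{(k-1)/2}/\mu)\rceil$ is exactly what makes $2^q\mu$ exceed the total accumulated perturbation $(\sqrt{mk}+2)\tilde M 2^{(k-1)/2}$, so the two families cannot mix; since the bottom (``$\mathbf{m}$'') block is carried exactly through the reduction, one reads off an exact $\OO_K$-basis and the returned $\widetilde{c}_{k-m+1},\ldots,\widetilde{c_k}$ approximate a genuine $\OO_K$-basis of $L$.

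The main obstacle will be the precision bookkeeping of this last step: one has to check that every inequality of the Buchmann--Pohst analysis survives the passage from the absolute value on $\Z$ to the $\OO_K$-norm. Concretely, the perturbation bound relies on the fact that norm-Euclidean rounding in $\OO_K$ has covering radius $\sqrt{\m_K}<1$, and the bound on reduced vectors uses the $\OO_K$-LLL inequality with constant $1/(\delta-\m_K)$ in place of the classical $2^{(j-1)/2}$. One must verify that the constant $c_K$ and the $\OO_K$-Hermite constant $\gamma_m$ hidden in $C$ are finite and make the separation $2^q\mu>(\sqrt{mk}+2)\tilde M 2^{(k-1)/2}$ hold with the stated $q$. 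Once these finitely many estimates are discharged, both correctness and the complexity bound follow as in~\cite{BuchmannPohst}.
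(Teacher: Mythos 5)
Your proposal follows essentially the same route as the paper's proof: a verbatim transfer of the Buchmann--Pohst argument in which the classical LLL size bound is replaced by the $\OO_K$-LLL inequality of Camus/Kim--Lee, and correctness rests on the same separation between reduced vectors coming from relations (norm at most $2^{(k-1)/2}\tilde M$) and those coming from non-relations (norm exceeding that bound by the choice of $q$). Your complexity bookkeeping via the bit-length $q=O\bigl((k+m)(m+\log(D^{1/m}/\mu))\bigr)$ and the potential-based swap count is sound and in fact more explicit than the paper, which asserts the operation count without derivation.
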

\begin{proof}
The proof is a verbatim translation of the correctness proof in the case of $\Z$-lattices. By Equation~\ref{eq:LLL}, if $b_1,\ldots,b_m$ is a reduced basis of a lattice such that $\det L\leq D$, then $\max(\norm{b_1},\ldots,\norm{b_m})\leq B$. This is used in the proof of \cite[Prop 2.2]{BuchmannPohst} to show that the successive minima of the lattice generated by the columns of the matrix in Algorithm~\ref{algo:Buchmann-Pohst} are bounded by $\tilde{M}$. We use the upper bound of the norm of an $\OO_K$-reduced basis of~\cite[Prop 1.3.8]{Camus2017} instead of~\cite[(1.12)]{LLL82} to obtain the equivalent of~\cite[Eq (9)]{BuchmannPohst}:
\begin{equation}\label{eq:c_j upper}
\norm{\widetilde{c_j}}\leq 2^{(k-1)/2}\tilde{M}\qquad1\leq j\leq k-m.
\end{equation}
For any vector $(\widetilde{c},\textbf{m})\in \oplus_{j=1}^k\Z (\tilde{c}_j,\textbf{m}_j)$ such that $\textbf{m}=(\textbf{m}^{(1)},\ldots,\textbf{m}^{(k)})$ is not a relation of $b_1,\ldots,b_k$, i.e. $\textbf{m}\not\in L^*$, i.e. $\sum_{j=}^k \textbf{m}^{(j)} \widetilde{b}_j\neq 0.$
Then, the arguments in the proof of~\cite[Th 3.1]{BuchmannPohst} can be copied in a verbatim manner to obtain:
\begin{equation}\label{eq:c_j lower}
\norm{\widetilde{c}_j}> 2^{(k-1)/2}\tilde{M}\qquad 1\leq j\leq k.
\end{equation}
Since Equations~\eqref{eq:c_j upper}~and~\eqref{eq:c_j lower} are contradictory for $1\leq j\leq k-m$, the only possibility is that $\textbf{m} \in L^*$. We do the same transformations on the matrix with $b_j$ instead of $\widetilde{b_j}$ and call $c_j$ the vectors which replace $\widetilde{c_j}$ in this case. What we have proved is that $c_1=c_2=\cdot=c_{k-m}=0$.

Since the transformations done in the LLL algorithm on the columns of the matrix are reversible, they preserve the rank of any subset of rows of the matrix.  The rank of the first $m$ rows of this matrix is $\dim L=m$. Since $c_1=c_2=\cdot=c_{k-m}=0$, the vectors $c_{k-m+1},\ldots,c_k$ form a basis of~$L$.  
\end{proof}

\paragraph{LLL over $\Z[i\sqrt{d}]$ as a practical improvement} The constant hidden in the big Oh of implementation is (inverse) proportional to $\log \delta$, which (inverse) proportional to $\log (\delta-\m_K)$. Hence the difference is of only a few percentages. The dependence of the time in the rank  is quartic, so we replace a number of operations of high-precision real numbers by $2^4=16$ times less complex numbers at the same level of precision. Finally, using the  Karatsuba trick, a multiplication of complex numbers costs $3$ multiplications of real numbers, so the overall gain is a factor $16/3\approx 5.33$. An implementation~\cite{Elbaz2021} of $\Z[i\sqrt{d}]$-LLL shows that the reduction of an $\OO_K$-lattice is $\approx 5$ times faster when its algebraic structure is used when compared to forgetting it and using only the underlying structure of $\Z$-lattice: they used $\OO_K=\Z[i]$, the coordinates of the basis vectors have $512$ bits and $\delta=0.99$ both over $\Z$ and $\Z[i]$.\footnote{A similar speed-up was obtained in the case of $\Z[\zeta_k]$-LLL in~\cite{kim2017lattice}: an example took 20 s over $\Z[\zeta_k]$ compared to 75 s  over~$\Z$.}

\begin{paragraph}{Consequences in cryptography.}
Given our current contribution, the speed-up due to automorphisms is $5.33$. It is an open question to extend it to a larger class of $\OO_K$-rings (e.g.~\cite{PelletMary2019} investigate if LLL can be extended to arbitrary $\OO_K$ by solving CVP instances in dimension $[\OO_K:\Z]$.). If Buchmann-Pohst can be extended, the best speed-up that one can target is $[\OO_K:\Z]^3$. Indeed, the dependence of LLL in the dimension is quartic whereas the cost of the arithmetic over $\OO_K$ is quadratic or quasi-linear if fast arithmetic is used  

The speed-up due to the automorphisms  is  polynomial in the number of automorphisms. This is a familiar situation because a similar speed-up happens for automorphisms in the case of classical algorithms for factorization, discrete logarithms in finite fields and discrete logarithms on elliptic curves (see e. g.~\cite[Sec 4.3]{JLSV2006},\cite[Sec 5.3]{BGGM2015}). For instance, the attacks in~\cite{Morain1999} against ECDSA using endomorphisms of the curve achieved a polynomial speed-up in an algorithm of  exponential complexity. The community reacted by suggesting to use elliptic curves with endomorphisms as their effect was now considered to be benign, e.g.~\cite{GLV2001}. In a similar manner, if the security of the class group of a given degree and determinant is considered sufficient, we suggest to use fields with automorphisms in order to speed-up the protocol, e.g. it has been done by XTR in cryptosystems based on discrete logarithms in finite fields~\cite{XTR2000}. 
\end{paragraph}

%\begin{paragraph}{It is an open problem to extend Buchmann-Pohst to non-Euclidean rings $\OO_K$.} Indeed, from a theoretical point of view one can imagine a more complicated procedure to the the Euclidean division, like solving an CVP instance in small dimension, as it is done in~\cite{PelletMary2019}. From a practical point of view, in~\cite{kim2017lattice} it is noted that a slight modification of the Euclidean division also works in non norm-Euclidean cyclotomic rings. One can also imagine that the $\OO_K$-LLL-reduced basis doesn't have the same properties as in the norm-Euclidean case but the Buchmann-Pohst algorithms continues to work in a fraction of the cases. \end{paragraph}

\section{A new algorithm using cyclotomic units}\label{sec:cyclotomic}
When $K=\Q(\zeta_m)$ for an arbitrary integer $m$ we  define the group of cyclotomic units to be the subgroup $C$ of $K^*$ generated by $-1$, $\zeta_m$ and $\zeta_m^j-1$ with $j=1,2,\ldots,m-1$, intersected with the group of units~$\OO_K^*$. 
We follow the notations of~\cite[Sec. 3]{Cramer2021mildly}, in particular $m$ and $k$ don't have the same meaning as in the other sections. Factor  $m=p_1^{\alpha_1}p_2^{\alpha_2}\cdots p_k^{\alpha_k}$ and, for each index $i$, put $m_i=m/p_i^{\alpha_i}$.

For $j=1,\ldots,m-1$ we set
\begin{equation*}
v_j = \left\{ 
\begin{array}{ll}
1-\zeta_m^j,&\text{if for all $i$ we have $m_i\nmid j$}, \\
\frac{1-\zeta_m^j}{1-\zeta_m^{m_i}},&\text{otherwise for the unique $i$ such that $m_i\mid j$.}
\end{array}\right. 
\end{equation*}

\subsection{Unconditional results}

\begin{lem}[Theorem 4.2 in \cite{Kucera1992}]\label{lem:cyclotomic units}
The lattice $M:=\Log C$ admits the system of generators $\{b_j=\Log (v_j)\mid j=1,2,\ldots,m-1\}$. 
\end{lem}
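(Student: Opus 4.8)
The plan is to prove the two inclusions $\langle b_1,\dots,b_{m-1}\rangle_{\Z}\subseteq M$ and $M\subseteq \langle b_1,\dots,b_{m-1}\rangle_{\Z}$ separately. Throughout I use that $\Log$ annihilates roots of unity, so $\Log(\zeta_m^j-1)=\Log(1-\zeta_m^j)$ and the generators $-1,\zeta_m$ of $C$ contribute nothing to $M$. Consequently $M$ is spanned by the vectors $\Log\!\big(\prod_j(1-\zeta_m^j)^{e_j}\big)$, where $(e_j)$ ranges over the exponent tuples for which the product is a unit, and $\Log$ is injective on $C$ modulo torsion.

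\emph{First inclusion.} Here I would show that each $v_j$ is genuinely a cyclotomic unit, i.e.\ $v_j\in C$. Membership in the multiplicative group generated by $-1,\zeta_m$ and the $1-\zeta_m^\ell$ is immediate from the defining formula, so the real point is that $v_j\in\OO_K^*$. This is a local computation. A finite prime $\mathfrak p$ of $\OO_K$ can divide $1-\zeta_m^\ell$ only if $\mathfrak p$ lies over a rational prime $p\mid m$ and the order $d$ of $\zeta_m^\ell$ is a power of $p$, in which case $v_{\mathfrak p}(1-\zeta_m^\ell)$ depends only on that order through the ramification of $p$ in $\Q(\zeta_d)\subseteq\Q(\zeta_m)$. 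For $j$ in the first case the order of $\zeta_m^j$ has at least two prime factors, so $N_{K/\Q}(1-\zeta_m^j)=\Phi_d(1)=1$ and $1-\zeta_m^j$ is already a unit. For $j$ in the second case the claim to verify is that the numerator $1-\zeta_m^j$ and the denominator $1-\zeta_m^{m_i}$ carry equal $\mathfrak p$-valuation at every prime over $p_i$ (both being units elsewhere), so that the quotient $v_j$ is a unit; this is exactly where the precise order of $\zeta_m^j$ as a power of $p_i$ enters.

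\emph{Second inclusion.} Conversely, take any $u\in C$ and write $u=\pm\zeta_m^c\prod_{\ell}(1-\zeta_m^\ell)^{e_\ell}$. Dividing by $\prod_\ell v_\ell^{e_\ell}$ and using $1-\zeta_m^\ell=v_\ell$ in the first case and $1-\zeta_m^\ell=v_\ell\,(1-\zeta_m^{m_{i(\ell)}})$ in the second (where $i(\ell)$ is the unique index with $m_{i(\ell)}\mid\ell$), I obtain $u\big/\prod_\ell v_\ell^{e_\ell}=\pm\zeta_m^c\prod_i(1-\zeta_m^{m_i})^{f_i}$ with $f_i=\sum_{\ell:\,i(\ell)=i}e_\ell$. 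The left-hand side is a unit (granting that each $v_\ell$ is a unit, from the first inclusion), hence so is the right-hand product; but $1-\zeta_m^{m_i}$ has nonzero valuation only at the primes over the single rational prime $p_i$, and these prime sets are disjoint for distinct $i$, so unit-ness forces every $f_i=0$. Therefore $u$ equals a root of unity times $\prod_\ell v_\ell^{e_\ell}$, and applying $\Log$ gives $\Log u=\sum_\ell e_\ell b_\ell\in\langle b_j\rangle_{\Z}$, as required.

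The main obstacle is the valuation bookkeeping in the first inclusion. The clean statement ``numerator and denominator have equal valuation'' must be checked uniformly over all ramified primes, and the prime $p=2$ needs separate attention: when $2\mid m$ the index $j=m/2$ gives $\zeta_m^{m/2}=-1$ and $1-\zeta_m^{m/2}=2$, whose valuation need not match that of $1-\zeta_m^{m_i}$, so the naive quotient can fail to be a unit. Pinning down the correct index set and the exponents $f_i$ so that every $v_j$ is a unit while the $b_j$ still span all of $M$ is precisely the delicate combinatorial content of Kučera's Theorem~4.2; it is the step where I would expect to invest the most care, or else simply invoke \cite{Kucera1992}.
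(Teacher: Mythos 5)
The paper supplies no proof of this lemma at all: it is imported verbatim as Theorem~4.2 of Ku\v{c}era's 1992 paper (via the restatement in \cite{Cramer2021mildly}), so there is no in-paper argument to compare yours against. Judged on its own, your sketch has the right architecture (two inclusions, with valuations at the ramified primes doing the work), and several of its ingredients are sound: $\Log$ kills $-1$ and $\zeta_m$; the index $i$ with $m_i\mid j$ is unique when it exists; and $1-\zeta_m^j$ is a unit exactly when the order of $\zeta_m^j$ has at least two prime factors, since $\Phi_d(1)=1$ for such $d$. The second inclusion's cancellation of the $f_i$ via disjointness of the primes above distinct $p_i$ is also the right idea.

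The genuine gap is the one you half-name and then defer: the claim ``each $v_j$ is a unit'' is false for the index set $j=1,\dots,m-1$ as written, and not only at $j=m/2$. Whenever $j=m_i t$ with $p_i\mid t$, the root $\zeta_m^j=\zeta_{p_i^{\alpha_i}}^{t}$ is a \emph{non-primitive} $p_i$-power root of unity, and then $v_{\mathfrak p}(1-\zeta_m^j)=\varphi(p_i^{\alpha_i})/\varphi(\mathrm{ord}(\zeta_m^j))>1=v_{\mathfrak p}(1-\zeta_m^{m_i})$ at every prime $\mathfrak p\mid p_i$, so the quotient $v_j$ is a non-unit algebraic integer. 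Concretely, for $m=12$ and $j=6$ one gets $v_6=(1-\zeta_{12}^6)/(1-\zeta_{12}^3)=2/(1-i)=1+i$, of norm $4$; for $m=4$ the element $v_2=1+i$ has $\Log v_2\neq 0$ while $\Log C=\{0\}$, so the statement cannot be taken at face value for the full range of $j$. This is not a technicality you can patch locally: it breaks your first inclusion outright, and your second inclusion borrows the first (``the left-hand side is a unit, granting that each $v_\ell$ is a unit''). Repairing it requires exactly the combinatorial selection of indices and exponents that constitutes Ku\v{c}era's Theorem~4.2, which you explicitly decline to reprove. So the proposal is an honest reduction of the lemma to its source rather than a proof of it --- which, to be fair, is also all the paper does.
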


\begin{lem}[Lemma 3.5 in \cite{Cramer2021mildly}]\label{lem:mildly}
For any integer $j\in\{1,2,\ldots,m-1\}$ we have $\norm{1-\zeta_m^j}_2=O(\sqrt{m})$. Hence $\norm{B_M}_2=O(\sqrt{m})$.
\end{lem}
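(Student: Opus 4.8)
The plan is to treat the two assertions separately: the canonical-embedding bound on $1-\zeta_m^j$ is immediate, whereas the bound on $\norm{B_M}_2$, whose rows are the logarithmic embeddings $\Log(v_j)$, is not a formal consequence of it but reduces to a single sharp estimate on a sum of squared logarithms.

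First I would spell out $\norm{1-\zeta_m^j}_2$ in the Minkowski embedding. The embeddings of $\Q(\zeta_m)$ are $\sigma_a:\zeta_m\mapsto\zeta_m^a$ for $a\in(\Z/m\Z)^*$, so
\[
\norm{1-\zeta_m^j}_2^2=\sum_{a\in(\Z/m\Z)^*}\bigl|1-\zeta_m^{aj}\bigr|^2=\sum_{a\in(\Z/m\Z)^*}\bigl(2-2\cos(2\pi aj/m)\bigr).
\]
Every summand is at most $4$ and there are $\varphi(m)\le m$ of them, so $\norm{1-\zeta_m^j}_2\le 2\sqrt{\varphi(m)}\le 2\sqrt m$. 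This disposes of the first claim.

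For the second claim, by Notation~\ref{notation:norm} we have $\norm{B_M}_2=\max_j\norm{\Log(v_j)}_2$, where $\Log(v_j)_a=\log|\sigma_a(v_j)|$. Since $v_j$ is either $1-\zeta_m^j$ or a quotient $(1-\zeta_m^j)/(1-\zeta_m^{m_i})$, the inequality $(x-y)^2\le 2x^2+2y^2$ reduces everything to showing, uniformly in $t\in\{1,\dots,m-1\}$,
\[
\sum_{a\in(\Z/m\Z)^*}\bigl(\log|1-\zeta_m^{at}|\bigr)^2=O(m).
\]
Setting $d=\gcd(t,m)$ and $n=m/d$, one has $\zeta_m^{at}=\zeta_n^{at'}$ with $\gcd(t',n)=1$; as $a$ ranges over $(\Z/m\Z)^*$ its image modulo $n$ covers $(\Z/n\Z)^*$ uniformly, each residue attained $\varphi(m)/\varphi(n)$ times, and multiplying by the invertible $t'$ merely permutes these residues. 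Hence the displayed sum equals $\tfrac{\varphi(m)}{\varphi(n)}\sum_{b\in(\Z/n\Z)^*}\bigl(\log(2\sin(\pi b/n))\bigr)^2$, and it suffices to prove the key estimate $\sum_{b\in(\Z/n\Z)^*}(\log(2\sin(\pi b/n)))^2=O(\varphi(n))$ uniformly in $n$, which then yields $O(\varphi(m))=O(m)$.

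This key estimate is the one genuinely delicate step. The naive bound $|\log(2\sin(\pi b/n))|=O(\log n)$ applied to each of the $\varphi(n)$ terms only gives $O(\varphi(n)\log^2 n)$; the saving comes from the fact that only a vanishing fraction of the terms are large. The values $\log(2\sin(\pi b/n))$ sample the function $u\mapsto\log(2\sin\pi u)$ on $(0,1)$, whose square is integrable across the logarithmic singularities at $u=0,1$ because $\int_0^1(\log(2\sin\pi u))^2\,du$ is a finite constant. I would make this precise by splitting the sum at $b\le\varepsilon n$ (and symmetrically near $n$), bounding the bulk terms trivially by $O(\varphi(n))$ since there $2\sin(\pi b/n)$ is bounded away from $0$, and estimating the tail by Abel summation against the coprime-counting function $\#\{b\le x:\gcd(b,n)=1\}=\tfrac{\varphi(n)}{n}x+O(\text{error})$, turning it into $\varphi(n)\int_0^\varepsilon(\log(1/u))^2\,du=O(\varphi(n))$. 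The main obstacle is exactly controlling this near-singularity tail with the correct density factor $\varphi(n)/n$, so as not to lose a $\log\log m$ factor; the reduction to coprime residues and the factor of two for the quotient form of $v_j$ are routine bookkeeping.
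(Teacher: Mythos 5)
The paper does not actually prove this lemma: it is imported verbatim from \cite{Cramer2021mildly}, so there is no internal proof to compare against. Your argument is essentially correct and, usefully, supplies the step that the statement's ``Hence'' glosses over: a bound on the canonical-embedding norm of $1-\zeta_m^j$ does not formally control $\norm{B_M}_2=\max_j\norm{\Log(v_j)}_2$, since $\log|\sigma_a(v_j)|$ can be of order $-\log m$ when $|\sigma_a(v_j)|$ is small; the real content is the second-moment bound $\sum_{a\in(\Z/m\Z)^*}\bigl(\log|1-\zeta_m^{at}|\bigr)^2=O(m)$, and your reduction to the classes modulo $n=m/\gcd(t,m)$ together with the square-integrability of $\log(2\sin\pi u)$ is the right way to get it.

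One simplification removes the only delicate point you flag. You do not need Abel summation against the coprime-counting function (whose error term, $O(2^{\omega(n)})$ per interval, you would otherwise have to control): since all summands are nonnegative, bound the sum over $b\in(\Z/n\Z)^*$ by the sum over all $b\in\{1,\dots,n-1\}$, which is $O(n)$ by comparison with $\int_0^1\log^2(2\sin\pi u)\,du<\infty$. The total is then $\tfrac{\varphi(m)}{\varphi(n)}\cdot O(n)$, and since $n\mid m$ implies $\varphi(n)/n\ge\varphi(m)/m$, this is $O(\varphi(m))=O(m)$ with no $\log\log m$ loss. The remaining bookkeeping (the factor $2$ for the quotient form of $v_j$, the surjectivity of $(\Z/m\Z)^*\to(\Z/n\Z)^*$ with fibers of size $\varphi(m)/\varphi(n)$) is correct as you state it.
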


\begin{thm}
Algorithm~\ref{algo:full L} computes a basis of the unit group of $\Q(\zeta_m)$ in $\mathrm{poly}(m)$ time and uses $O( m^2\log m)$ qubits.
\end{thm}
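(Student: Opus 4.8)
The plan is to run Algorithm~\ref{algo:full L} with the lattice of cyclotomic units $M=\Log C$ playing the role of the known sublattice of $L=\Log\OO_{\Q(\zeta_m)}^*$, and to feed it the CHSP dual lattice sampler of this field. The entire gain over the generic bound $O(m^5\log m)$ of Example~\ref{ex:complexity} comes from the fact that, once a short basis of $M$ is available, the quantum sampling of Step~1 only has to be carried out at polynomially small precision rather than at the doubly-exponential precision demanded by Buchmann--Pohst.

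First I would assemble the inputs required by Problem~\ref{problem:dual}. By Lemma~\ref{lem:cyclotomic units} the vectors $b_j=\Log v_j$ generate $M$, and by Lemma~\ref{lem:mildly} they satisfy $\norm{B_M}_2=O(\sqrt m)$; together with the inequality $\norm{B_M}\le\sqrt m\,\norm{B_M}_2$ of Notation~\ref{notation:norm} this yields $\norm{B_M}=O(m)$, and hence $\lambda_1^*(M)\ge 1/\norm{B_M}=\Omega(1/m)$. For the upper bound on $[L:M]$ I would use the classical fact that this index equals the plus class number $h^+$ of $\Q(\zeta_m+\zeta_m^{-1})$ when $m$ is a prime power, and equals $h^+$ up to a power of $2$ controlled by Sinnott's index formula in general; since $h^+=D^{O(1)}$ by Brauer--Siegel, we get $\log[L:M]=O(\log D)=O(m\log m)$, and likewise $\log\det L=\log\Reg=O(m\log m)$.

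The correctness then follows from Theorem~\ref{th:correctness}. I would set the sampler precision to $\delta\lambda_1^*(L)=1/(2\norm{B_M})$; the hypothesis $\delta\lambda_1^*(M)<1/(2\norm{B_M})$ holds because $M\subset L$ gives $L^*\subset M^*$ and $\lambda_1^*(M)\le\lambda_1^*(L)$. A vector $\widetilde y$ output by the sampler with $\dist(\widetilde y,L^*)<1/(2\norm{B_M})$ then also satisfies $\dist(\widetilde y,M^*)<1/(2\norm{B_M}_2)$, so by Lemma~\ref{lem:Babai} the $\BDD$ rounding lands on the closest point of $M^*$, which is forced to lie in $L^*$. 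Collecting $k=m\log_2(\sqrt m\,\Lip(f))+\log_2\det L=\mathrm{poly}(m)$ such vectors generates $L^*$, and the exact HNF and SNF steps recover a basis of $L$ in $\mathrm{poly}(m)$ time; inverting it and adjoining $\zeta_m$ returns the unit group of $\Q(\zeta_m)$.

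It remains to count qubits, which is $O(Qm)$ with $Q$ as in Equation~\eqref{lem:qubits}. The decisive simplification is that $\delta\lambda_1^*=\Theta(1/m)$ makes $\log\frac1{\delta\lambda_1^*}=O(\log m)$, killing the $2^{O(mk)}$ blow-up that is responsible for the factor $m^3$ separating the cyclotomic bound from the generic one. After this the surviving contributions are the $O(m\log(m\log\frac1\eta))=O(m\log m)$ term and the cost of representing the oracle's values, and the whole estimate collapses to $O(m^2\log m)$ provided the latter is only $O(\log m)$ qubits per coordinate. This last point is where I expect the real difficulty to lie: the generic straddle encoding uses a resolution $\nu$ that is doubly exponentially small, so $\log\frac1\nu=\Omega(m^2)$ per coordinate and one would be stuck at $\Omega(m^3)$ qubits. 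The obstacle is therefore to exploit the totally real, abelian structure of the problem---working effectively in $\Q(\zeta_m+\zeta_m^{-1})$, whose unit lattice is the same $L$---and to replace the straddle encoding by the alternative oracle of Appendix~\ref{appendix:alternative f}, for which $\Lip(f)=m^{O(1)}$ and $O(\log m)$ qubits of resolution per coordinate suffice; only then does the announced $O(m^2\log m)$ bound come out.
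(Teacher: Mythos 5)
Your setup and correctness argument --- take $M=\Log C$, bound $\norm{B_M}$ via Lemmas~\ref{lem:cyclotomic units} and~\ref{lem:mildly}, run Algorithm~\ref{algo:full L} with sampler accuracy $1/(2\norm{B_M})$ so that Babai rounding into $M^*$ is forced to land on points of $L^*$, then finish with HNF/SNF --- is exactly the paper's route, and your observation that the whole gain over Example~\ref{ex:complexity} comes from $\log\frac{1}{\delta\lambda_1^*}=O(\log m)$ is also the paper's. Where you diverge is the final qubit count. The paper keeps the straddle-encoding oracle of Theorem~\ref{th:reduction}, plugs $\log\Lip(f)\ll m^2+m\log D$ and $\log D\ll m\log m$ into Equation~\eqref{lem:qubits}, and finds that the Lipschitz term dominates, giving $Q\ll m^2+m\log D\ll m^2\log m$; the announced bound is read off from this value of $Q$. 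The paper never touches the oracle and in particular never invokes Appendix~\ref{appendix:alternative f}.

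Your worry that the true count is $O(Qm)$ (as Lemma~\ref{lem:dual lattice sampler} and Theorem~\ref{th:correctness} both state), which with the straddle oracle would give $O(m^3\log m)$, is a fair reading of the paper's own statements; but your proposed escape is a genuine gap. You assert that the alternative oracle of Appendix~\ref{appendix:alternative f} has $\Lip(f)=m^{O(1)}$ and needs only $O(\log m)$ qubits of resolution per coordinate. Nothing in the appendix proves this, and as written the function there is built from $x_i\mapsto e^{x_i}\bmod \Z$, whose derivative $e^{x_i}$ is unbounded, so it is not globally Lipschitz at all; one would at least have to restrict to a fundamental domain and control the operator norm of $w^{-1}$, and neither you nor the appendix does so. Moreover that oracle comes with no analogue of the strong-periodicity parameter $r$ nor of the $(a,r,\varepsilon)$-oracle verification of Theorem~\ref{th:reduction}, so it cannot be fed to the CHSP machinery as is. If you want to match the paper, drop the oracle swap and carry out the computation of $Q$ with Theorem~\ref{th:reduction}; if you maintain that the count must be $Qm$ rather than $Q$, that is an objection to the paper's bookkeeping, not something the Appendix-A detour resolves in its current form. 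The auxiliary material you add on $[L:M]=h^+$ and Brauer--Siegel is harmless but not needed for the paper's argument.
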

\begin{proof}
The only quantum step of Algorithm~\ref{algo:full L} is the dual lattice sampler, whose parameters are $\delta=\norm{B_L}/\lambda_1^*(L)$ and $ \eta=1/k^2$ where $$k = \log_2(\sqrt{m}\Lip(f)) +\log_2(\det L))$$ is given in step 1 of the algorithm. In this section $m$ is not necessarily equal to the rank $m'$ of $\Z[\zeta]^*$ but $O(m')=O(m)$ so that we use $m$ and $m'$ interchangeably.
By Lemma~\ref{lem:dual lattice sampler}, the sampler used $O(Qm)$ qubits with $Q$ as below. We write $a\ll b$ for $a=O(b)$. We use Th~\ref{th:reduction}: $\log \Lip(f)\ll m^2+m\log D$ and then $ k\ll m^2+m\log D$. We also use Lemma~\ref{lem:mildly}: $\log 1/(\delta\lambda_1^*(L))\ll \log (1/\norm{B_M}) \ll \log m$ and $\log D=(m-2)\log m\ll m\log m$.
$$
\begin{array}{rcl}
Q&\ll & m \log(m\log\frac{1}{\eta}))+\log\frac{1}{\delta\lambda_1^*(L)}+\log(\Lip(f))\\
&\ll & (m \log m +m\log \log k)+\log m+m^2+m\log D \\
&\ll &m\log m+m\log\log D+\log m+m^2+m\log D\\
&\ll &m^2+m\log D\\
&\ll & m^2\log m.
\end{array}
$$
\end{proof}
This is to be compared to the $O(m^5+m^4\log D)$ qubits used by number fields in general (Cor~\ref{cor:complexity}) and $O(m^5\log m)$ used by cyclotomic field if the Algorithm~\ref{algo:CHSP} were used without taking profit of the cyclotomic units (Ex~\ref{ex:complexity}).

\begin{rem}
Our improvement can be used whenever the lattice of units has a sublattice admitting a short basis and this is not limited to the cyclotomic fields and, more generally, abelian fields. For instance, Kihel~\cite{Kihel2001} proposed a family of fields with dihedral Galois group which have a full-rank  subgroup of units which are short. Although one expects a speed-up in all these cases, the cyclotomic case is even more special. Indeed, the parameter $\delta\leq \norm{B_M}/\lambda_1^*(L)$ depends in a large extent on the index $[L:M]$. By~\cite[Ex 8.5]{Washington} (see also~\cite[Th 2.8]{Cramer2016recovering}), in the case of cyclotomic fields of prime conductor, $ [L:M]=h^+(m)$. By Conjecture~\ref{conj:Weber} this is small, so the impact of the full-rank subgroup is more important in this case.  
\end{rem}

\subsection{Consequences of a recent conjecture for cyclotomic fields}\label{sec:HSP}
One denotes $h(m)$ the class group of $\Q(\zeta_m)$, $h^+(m)$ the class group of its maximal real subfield, i.e. $K:=\Q(\zeta_m+\zeta_m^{-1})$, and one sets $h^-(m)=h(m)/h^+(m)$. A folklore conjecture states that $h^-(m)$ is large but easy to compute whereas $h^+(m)$ is small but hard to compute. A recent work compiled existing conjectures and pushed further the numerical computations, so one can give a precise form to the folklore conjecture.
\begin{conjecture}[Assumption 2 in \cite{Cramer2021mildly}]\label{conj:Weber} For all integers $m$, 
$$h^+(m)\leq \mathrm{poly}(m)$$ for a fixed polynomial \textit{poly}.   
\end{conjecture}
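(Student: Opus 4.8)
Since Conjecture~\ref{conj:Weber} is a restatement of a folklore statement — essentially a generalized form of Weber's class number problem — I would not expect to give an unconditional proof, and the plan below is really a reduction that isolates the single open obstacle. The natural starting point is the analytic class number formula for the totally real field $K=\Q(\zeta_m+\zeta_m^{-1})$, of degree $n=\phi(m)/2$ and unit rank $n-1$. Writing $\Reg K$ for its regulator and $\disc K$ for its discriminant, the formula reads $h^+(m)\,\Reg K = 2^{1-n}\sqrt{|\disc K|}\prod_{\chi}L(1,\chi)$, the product being over the nontrivial even Dirichlet characters of conductor dividing $m$. Hence bounding $h^+(m)$ by a polynomial in $m$ is \emph{equivalent} to the lower bound $\Reg K \geq 2^{1-n}\sqrt{|\disc K|}\prod_\chi L(1,\chi)/m^{O(1)}$ on the regulator.

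The cyclotomic units make the right-hand side explicit. By Lemma~\ref{lem:cyclotomic units} the lattice $M=\Log C$ is generated by the vectors $b_j=\Log v_j$, and a standard computation identifies its covolume with exactly $2^{1-n}\sqrt{|\disc K|}\prod_\chi L(1,\chi)$; equivalently, for prime conductor $[L:M]=h^+(m)$ by~\cite[Ex 8.5]{Washington} (for general $m$ the index is $h^+(m)$ times an explicit rational factor via Sinnott's index formula), so that $h^+(m)=\mathrm{covol}(M)/\Reg K$. I would therefore first bound $\mathrm{covol}(M)$ from above using the shortness of the generators: Lemma~\ref{lem:mildly} gives $\norm{b_j}_2=O(\sqrt m)$, whence $\log\mathrm{covol}(M)=O(m\log m)$. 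What remains is to show that the full regulator $\Reg K$ is at most $m^{O(1)}$ smaller than $\mathrm{covol}(M)$, i.e.\ that the cyclotomic units sit at polynomial index inside the full unit group.

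The hard part, and the reason the statement stands as a conjecture rather than a theorem, is exactly this last step: lower-bounding $\Reg K$, equivalently upper-bounding the index $[L:M]=h^+(m)$. Both $\mathrm{covol}(M)$ and $\Reg K$ are of size $\exp(\Theta(m\log m))$, and the conjecture asserts that their ratio is only polynomial; nothing in the explicit structure of the cyclotomic units forces the full unit group to be no more than polynomially denser than $C$. The best genuine handle is Iwasawa-theoretic: Washington's theorem shows that for each fixed prime $p$ the $p$-part of $h^+$ along a cyclotomic $\Z_\ell$-tower stabilizes, and genus theory controls the $2$-part; but assembling these local bounds over all $p$ into a single polynomial bound is open.

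I would therefore aim only for the realistic conditional and numerical targets that justify stating the result as an assumption: a proof under GRH together with Cohen--Lenstra-type heuristics, which predict $h^+(m)=1$ for a density-one set of conductors and small class numbers otherwise, and the numerical verification extending Weber's classical computations for $m=2^n$ and Schoof's for prime conductors. In short, the approach reduces the conjecture to the still-open regulator lower bound, so an honest write-up presents it as an assumption supported by heuristic and numerical evidence rather than as a proved theorem.
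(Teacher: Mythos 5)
The statement you were asked to prove is a conjecture --- the paper imports it verbatim as Assumption~2 of Cramer et al.\ and offers no proof whatsoever --- so there is no paper proof to compare against. You correctly decline to prove it and present it as an open assumption, and your supporting reduction via the analytic class number formula and the index $[L:M]=h^+(m)$ of the cyclotomic units is consistent with the facts the paper itself invokes elsewhere (e.g.\ the citation of Washington, Ex.~8.5, for the prime-conductor case), so your treatment is accurate and appropriately honest.
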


\paragraph{Consequence of the conjecture : class number in real cubic fields without quantum computers}
Marie-Nicole Gras~\cite{Gras1975} proposed an algorithm to compute class numbers of cyclic cubic fields in polynomial time with respect to a bound on $h$. If $m$ is the conductor of a cyclic cubic field, then Conjecture~\ref{conj:Weber} implies that $h$ is polynomial in $m$.  
Note also that Schoof~\cite{Schoof2003} proposed an algorithm for $\Q(\zeta_m)^+$ with prime $m$ which is faster when a small bound on $h^+(m)$ is known.  
\paragraph{Reduction of unit group computation to HSP}

\begin{thm}
Under Conjecture~\ref{conj:Weber} the unit group of $K=\Q(\zeta_m)^+$ and its class group can be computed by an HSP algorithm in polynomial time and space. 
\end{thm}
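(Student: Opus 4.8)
The plan is to reduce the computation of the unit group of $K=\Q(\zeta_m)^+$ to an ordinary (finite, discrete) HSP instance rather than the continuous CHSP, by exploiting Conjecture~\ref{conj:Weber} in two ways: first to bound the index $[L:M]$ between the true unit lattice $L=\Log\OO_K^*$ and the cyclotomic-unit sublattice $M=\Log C$, and second to control the class group so that the hidden-subgroup function has a genuinely discrete domain. The starting observation is that for cyclotomic fields of this type $[L:M]=h^+(m)$ (as recalled via~\cite[Ex 8.5]{Washington} and~\cite[Th 2.8]{Cramer2016recovering}), so under the conjecture this index is $\mathrm{poly}(m)$. Combined with the short basis of $M$ furnished by Lemma~\ref{lem:cyclotomic units} and its $\norm{B_M}_2=O(\sqrt m)$ size bound from Lemma~\ref{lem:mildly}, we have complete explicit control on the relationship between $L$ and $M$.

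First I would set up the finite abelian group over which the HSP is posed. Because $M\subset L$ has polynomially bounded index, the quotient $L/M$ is a finite abelian group of order $h^+(m)=\mathrm{poly}(m)$, and dually $M^*/L^*$ is finite of the same order by Lemma~\ref{lem:dual}(2). The key point enabling a \emph{discrete} HSP is that we already possess an explicit basis $B_M$ of $M$ (equivalently of $M^*$), so instead of sampling continuously near $L^*$ and correcting, as in Algorithm~\ref{algo:full L}, we can phrase the problem entirely inside the finite group $M^*/L^*$. Concretely, I would define a function on the finite group $\big(M^*/L^*\big)$-ambient lattice whose hidden subgroup is exactly $L^*/$(a known lattice), using the cyclotomic-unit relations as the known scaffolding; membership of a candidate dual vector in $L^*$ is testable because $v\in L^*$ iff $\langle v,\ell\rangle\in\Z$ for all $\ell$ in the (now explicitly known up to finite index) generating set.

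Next I would invoke the standard HSP machinery of Shor and Kitaev, as recalled in the excerpt: prepare the uniform superposition over the finite group, apply the oracle, apply the Quantum Fourier Transform over the finite abelian group, and measure to obtain elements of the orthogonal subgroup; a few iterations yield generators. Since the group has order $\mathrm{poly}(m)$, the QFT is a genuine finite Fourier transform computable exactly in $\mathrm{polylog}$ many qubits and $\mathrm{poly}(m)$ gates, with no Riemann-sum approximation, no Gaussian window, and no precision blow-up from Buchmann-Pohst. The number of qubits is therefore governed only by $\log|L/M|=O(\log h^+(m))=O(\log m)$ together with the $O(m\log m)$ needed to encode lattice coordinates in the short basis of $M$, giving the claimed polynomial space. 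For the class group, the identical reduction applies: the Arakelov-class-group or ray-class formulation of~\cite{eisentrager2014quantum} becomes a finite HSP once $h^+(m)$ is polynomially bounded, so both invariants fall out of the same finite Fourier sampling.

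The hard part will be justifying rigorously that the hidden-subgroup function can be made to have a genuinely \emph{finite} and efficiently computable domain: the unit lattice itself is infinite, and what the conjecture controls is only the index $[L:M]$, not the covolume or the geometry of $L$. The crux is to peel off the known infinite part $M$ (whose basis we have explicitly) and reduce modulo it, so that only the finite piece $L/M$ of order $h^+(m)$ remains to be discovered; this is exactly the reduction analogous to the role played by the known sublattice in Algorithm~\ref{algo:full L}, now carried out \emph{before} any quantum step rather than as a classical post-correction. Verifying that the resulting finite-domain function still satisfies the HSP hypotheses (that its level sets are precisely the cosets of $L^*/(\text{known lattice})$ and that it is efficiently computable from the arithmetic of $\OO_K$ without reintroducing transcendental quantities) is where the real work lies, and it is here that the polynomiality of $h^+(m)$ from Conjecture~\ref{conj:Weber} is indispensable.
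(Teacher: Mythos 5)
There is a genuine gap, and it sits exactly where you locate it yourself: the construction of the hiding function. Your plan correctly identifies that Conjecture~\ref{conj:Weber} bounds $[L:M]=h^+(m)$ and that the problem should become a finite HSP relative to the known sublattice $M=\Log C$, but you never produce a concretely computable function whose level sets are the cosets of the hidden subgroup. The membership test you propose does not work: knowing generators of $L$ only ``up to finite index'' means you know generators of $M$, and the condition $\langle v,\ell\rangle\in\Z$ for all $\ell$ in that generating set certifies $v\in M^*$, not $v\in L^*$ --- it cannot distinguish the lattice you want from the one you already have. Likewise, the finite group $M^*/L^*$ cannot serve as the domain of the oracle, because $L^*$ is precisely the unknown; the domain must be a \emph{known} finite group in which the image of the unit group sits as the hidden subgroup.

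The paper supplies the missing ingredient via Kummer theory. Set $N=\mathrm{poly}(m)!$, so that the conjecture gives $\OO_K^*\subset C^{1/N}$, fix a basis $\varepsilon_1,\ldots,\varepsilon_{m'}$ of $C/\langle\zeta_m,-1\rangle$ (computable by HNF from the $v_j$ of Lemma~\ref{lem:cyclotomic units}), and define on the known finite group $\Z/2\Z\times(\Z/N\Z)^{m'}$ the map
$(e_0,\ldots,e_{m'})\mapsto K\bigl(\zeta_N,\sqrt[N]{(-1)^{e_0}\prod_j\varepsilon_j^{e_j}}\bigr)$,
valued in canonical representations of Kummer extensions. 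The isomorphism criterion for Kummer extensions shows that two exponent vectors give the same field exactly when their quotient $\zeta\in C$ has $\sqrt[N]{\zeta}\in K(\zeta_N)$, and since $\sqrt[N]{\zeta}$ is real and $K$ is the maximal real subfield of $K(\zeta_N)$, this happens exactly when $\sqrt[N]{\zeta}\in\OO_K^*$; so the hidden subgroup is the image of the unit group. This is a bona fide finite abelian HSP solvable in $\polylog(N)=\poly(m)$ time and space by the Shor--Kitaev machinery. Your proposal is a reasonable reformulation of what the target HSP instance should look like, but without this (or an equivalent) oracle the argument does not close; note also that the domain the paper uses has order $2N^{m'}$, exponentially large, which is harmless for HSP --- the ``order $\mathrm{poly}(m)$'' group you aim for is not actually needed and not directly constructible.
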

\begin{proof}
Let $N=\mathrm{poly}(m)!$ and note that Conjecture~\ref{conj:Weber} implies that $\OO_K^*\subset C^{1/N}$. Let $m'=\varphi(m)/2-1$ and let $\varepsilon_1,\ldots,\varepsilon_{m'}$ be a basis of $C/\langle \zeta_m,-1\rangle$. Note that it can be computed using the HNF algorithm from the vectors $v_j$ of Lemma~\ref{lem:cyclotomic units}. Since $K$ is totally real, its only roots of unity are $ \pm 1$. For any $c\in C$, we call vectorization of $c$ and denote $\vect(c)$ the unique $(m'+1)$-tuple in $(e_0,e_1,\ldots,e_{m'})\in\Z/2\Z\times(\Z/N\Z)^{m'}$ such that 
$c=(-1)^{e_0}\prod_{j=1}^{m'}\varepsilon_j^{e_j}$. 

We set 
\begin{equation}
\begin{array}{rrcl}
f: & \Z/2\Z\times (\Z/N\Z)^{m'}  &\rightarrow & \left\{\text{
\begin{tabular}{c}
canonical representations\\
of Kummer extensions
\end{tabular}
}\right\} \\
   &(e_0,e_1,\ldots,e_{m'})&\mapsto &K(\zeta_N,\sqrt[N]{(-1)^{e_0}\prod_{j=1}^{m'} \varepsilon_j^{e_j}}))
\end{array}
\end{equation}
 We claim that the period of $f$ is 
 $$\mathrm{period}(f)=\{ \vect(u)\mid u \in (\OO_K)^* \}. $$
 Indeed, let $(e_j)$ and $(z_j)$ be such that $f(e_0,\ldots,e_{m'})=f(e_0+z_0,\ldots,e_m+z_{m'})$ and set $\varepsilon = (-1)^{e_0}\prod_j \varepsilon_j^{e_j}$ and $\zeta=(-1)^{z_0} \prod_j \varepsilon_j^{z_j}$. 
 
 We have $K(\zeta_N,\sqrt[N]{\varepsilon })=K(\zeta_N,\sqrt[N]{\varepsilon \zeta})$. Let $d:=[ K(\zeta_N,\sqrt[N]{\varepsilon }):K(\zeta_N)]$. Then the criterion of isomorphism of Kummer extensions~\cite[page 58]{Koch1992} states that $\sqrt[N]{\zeta}$ belongs to $K(\zeta_N)$. But $\sqrt[N]{\zeta}\in \R$ and the maximal real subfield of $K(\zeta_N)$ is $K$, so $\sqrt[N]{\zeta}\in \OO_K^*$. Conversely, if $\sqrt[N]{\zeta}\in \OO_K^*$ it is direct that $(z_0,\ldots,z_{m'})$ is a period of $f$. 

 Finally, to compute the complexity of the algorithm, we write $a=\poly(b)$ for $a=b^{O(1)}$ and $a=\polylog (b)$ for $a=(\log b)^{O(1)}$. By~\cite{kitaev1995quantum}, the cost of the HSP is $\polylog(N)$. Since, for any $n$, $n!\leq n^n$, we have 
 $$\text{time}=\polylog(N)=\poly(\mathrm{poly}(m)\log(\mathrm{poly}(m)) )=\poly(\poly(m))=\poly(m).$$
\end{proof}

\section{Conclusion and open questions}
\begin{enumerate}
\item The unit group algorithms follow a parallel path to those of other problems like the class group e.g Buchmann's algorithm computes the two groups together and the quantum algorithm for class groups~\cite{biasse2015quantum} is a generalization of the one of Hallgren et al~\cite{eisentrager2014quantum}. It is interesting to have a precise estimation of the number of qubits for class groups depending on the bound on the class number, on the computation of discrete logarithms in the class group etc.
\item The idea of a small dual basis can be generalized: a) to all abelian fields; b) to Galois fields with simple group of automorphisms and known units e.g.~\cite{Kihel2001}. 
\item The possibility of exploiting automorphisms of arbitrary order depends on the possibility of a Buchmann-Pohst algorithm for $\OO_K$ -lattices. One might explore such an algorithm which reduces the size of the norms on average but can locally increase them.
\item The quantum algorithms being probabilistic, the output is not certified. Can one do a classical algorithm to prove that a given set of units generate a subgroup which is $\ell$-saturated?
\item The technology of quantum computers is very new and it is not known how many physical qubits correspond to a given number of logical ones. Have the gates used by a HSP algorithms less errors and hence require less error-correction? A precise analysis goes beyond the scope of this article.
\end{enumerate}

\appendix

\section{An alternative function hiding the units}
\label{appendix:alternative f}

We place ourselves in the case where $K$ is totally real. As before we call $n$ its degree. Let $P$ be a polynomial which defines $K$ and let $\alpha_1,\ldots,\alpha_n$ be the $n$ roots of $P$ in $\R$. Let 
$$
\begin{array}{rccc}
w: & \{P\in \R[x] \mid \deg P\leq n-1\}& \rightarrow & \R^n \\
& P &\mapsto & (P(\alpha_1),\ldots,P(\alpha_n)).
\end{array}
$$
Clearly, $w$ is a linear isomorphism.

Assume that $\disc(f)$ is squarefree and then $\OO_K=\Z[x]/P$. 
We define $$
\begin{array}{cccc}
f:&\R^n&\rightarrow &(\R_n[x] \bmod \Z_n[x])\times (\R_n[x] \bmod \Z_n[x])\\
& (x_1,\ldots,x_n) & \mapsto & (w^{-1}(e^{x_1},\ldots,e^{x_n}), w^{-1}(e^{-x_1},\ldots,e^{-x_n})) \bmod \Z_n[x]^2.
\end{array}
$$ 
We claim that the set of periods of $f$ form a lattice $\Lambda$ such that 
$$2\Log \OO_K^* \subset \Lambda\subset  \Log \OO_K^*.$$
Indeed, let $\varepsilon$ be in $\OO_K^*$ and set $(x_1,\ldots,x_n)=\Log (\varepsilon^2)$.  Then $w^{-1}(e^{x_1,\ldots,e^{x_n})}$ is the representative of $\varepsilon^2$ in the normal basis of $K$. The coordinates are all integers because $\varepsilon^2$ is in $\OO_K=\Z[x]/\langle P\rangle$. Similarly, $w^{-1}(e^{-x_1},\ldots,e^{-x_n})$ is the representative of $\varepsilon^{-2}$ in the normal basis. Its coordinates are integers because $\varepsilon^{-2}\in _OO_K^*$.

Conversely, let $(x_1,\ldots,x_n)$ be such that $f(x_1,\ldots,x_n)=(0,\ldots,0)$ and let $\varepsilon=w^{-1}(e^{x_1},\ldots,e^{x_n})$. Since $f(x_1,\ldots,x_n)=0$, the coordinates of $\varepsilon$ in the normal basis are integers, so $\varepsilon\in \OO_K$. Similarly, $\varepsilon^{-1}\in \OO_K$, so $\varepsilon$ is a unit.

Finally, given $\Lambda$, one solves a linear system over $\Z/2\Z$ to find $\Log \OO_K^*$. Since $K$ is totally real, its roots of unity are $\pm 1$ and $\Log \OO_K^*$ completely determines $\OO_K^*$. 

%===================================

\section{A discussion on quantum security levels}\label{appendix:security levels}

The NIST post-quantum challenge~\cite{NIST} is willingly open on the definition of the  computational resources on a quantum computer: 

\begin{minipage}{0.9\textwidth}
\textit{
``Here, computational resources may be measured using a variety of different metrics (e.g.,
number of classical elementary operations, quantum circuit size, etc.)''}
\end{minipage}

\noindent A study conducted by Mosca et al.~\cite{Mosca2021} revealed that part of the experts consider that on the medium term one should not consider attacks which uses billions of qubits but one should protect against attacks which use hundred qubits. A similar situation happens in the case of classical algorithms: in 2015 the recommendations of the standardisation agencies (NIST, ANSSI, etc) was to use 2048-bit RSA keys. However, in~\cite{Logjam2015} the authors made a precise estimation that RSA 1024 can be broken with their implementation of NFS on a million cores and they made a study that 98\% of a sample of million+ servers use RSA 1024. Moreover, more than $30\%$ were supporting RSA 768 which had been broken since 2009. 

It is then necessary to be more precise on the quantum resources. We propose a classification on the time complexity and the number of qubits, as in Table~\ref{table:security levels} (for a review of the quantum attacks on the various public-key primitives see e.g.~\cite{Revue2022}). In this light, a verifiable delay function~\cite{Wesolowski2019VDF} based on the class group of a number field of high-degree and  non-cyclotomic is more secure than~RSA.

\begin{table}
\begin{small}
\begin{center}
\begin{tabular}{c|c|c|c}
& polynomial & subexponential & exponential \\
& time & time & time\\
\hline
$O(m)$ qubits &
\begin{tabular}{c}
factorization\\ discrete log\\ EC discrete log\\ IQC
\end{tabular}
&  &
\begin{tabular}{c}
lattices\\ error correction codes
\end{tabular}
\\
\hline
$O(m^5\log m)$ qubits & high degree CGP & & \\
\hline
\begin{tabular}{c}
superpolynomial\\space
\end{tabular}& & isogenies &\\
\end{tabular}
\end{center}
\end{small}
\caption{Classification of cryptographic primitives w.r.t. the resources of quantum attacks: factorization, discrete logarithms, elliptic curve (EC) discrete logarithms, class group of orders of imaginary quadratic fields (IQC), supersingular isogenies, lattices, error correction cryptosystems and class group (CGP) computations.}
\label{table:security levels}
\end{table}

\bibliographystyle{plain}
\bibliography{refs}

\end{document}